\newcommand{\blind}{0}
\newcommand{\bftab}{\fontseries{b}\selectfont}
\renewcommand{\liminf}{\operatornamewithlimits{\textrm{lim\,inf}}}
\newtheorem{lemma}{Lemma}
\newtheorem{definition}{Definition}
\newtheorem{theorem}{Theorem}
\newtheorem{coroll}{Corollary}
\tikzstyle{observed}=[circle, thick, minimum size=0.7cm, draw=black!100, fill=black!20]
\tikzstyle{latent}=[circle, thick, minimum size=0.7cm, draw=black!100]
\tikzstyle{plate}=[rectangle, thick, inner sep=0.3cm, draw=black!100]
\tikzstyle{shadeplate}=[rectangle, thick, inner sep=0.3cm, draw=black!100, fill=black!10]
\renewcommand{\epsilon}{\varepsilon}
\newcommand{\x}{\bm{x}}
\newcommand{\btheta}{\boldsymbol{\theta}}
\newcommand{\bbeta}{\boldsymbol{\beta}}
\renewcommand{\liminf}{\operatornamewithlimits{\textrm{lim\,inf}}}
\DeclareMathOperator{\EX}{\mathbb{E}}% expected value
\renewcommand{\P}{\textnormal{Pr}}%{\mathbb{P}}
\renewcommand{\Pr}{\textnormal{Pr}}
\newcommand{\1}{\mathbbm{1}}
\newcommand{\sC}{\mathcal{C}}
\newcommand{\sS}{\mathcal{S}}
\newcommand{\sX}{\mathcal{X}}
\newcommand{\z}{\textbf{z}}
\newtheorem{alg}{Algorithm}
\newenvironment{taggedmodel}[1]
 {\taggedmodelx}
 {\endtaggedmodelx}
\newtheorem{prop}{Proposition}
\newcommand{\zed}{z_i}
\let\originalleft\left
\let\originalright\right
\renewcommand{\left}{\mathopen{}\mathclose\bgroup\originalleft}
\renewcommand{\right}{\aftergroup\egroup\originalright}
\def\bas#1\eas{\begin{align*}#1\end{align*}}
\def\basn#1\easn{\begin{align}#1\end{align}}
\definecolor{dg}{RGB}{0,170,0}
\definecolor{dred}{RGB}{220,0,0}
\renewcommand{\liminf}{\operatornamewithlimits{\textrm{lim\,inf}}}
\newcommand{\pkg}[1]{{\normalfont\fontseries{b}\selectfont #1}}
\let\proglang=\textsf
\begin{document}

\def\spacingset#1{\renewcommand{\baselinestretch}%
{#1}\small\normalsize} \spacingset{1}

%%%%%%%%%%%%%%%%%%%%%%%%%%%%%%%%%%%%%%%%%%%%%%%%%%%%%%%%%%%%%%%%%%%%%%%%%%%%%%

\if0\blind
{
  \title{\bf Random Partition Models for Microclustering Tasks}
  \author{Brenda Betancourt
  \thanks{BB and RCS acknowledge funding from NSF-1652431 and NSF-1534412.
  GZ acknowledges support from the ERC through StG N-BNP" 306406 and from MIUR, through PRIN Project 2015SNS29B.
  The proposed methodology reflects that of the authors and not of the funding agencies. 
  }%\hspace{.2cm}
    \\
University of Florida, Department of Statistics,  
\\
       Giacomo Zanella \\
   Bocconi University, Department of Decision Sciences, BIDSA and IGIER, 
\\
and \\
Rebecca C. Steorts\\
Duke University, Department of Statistical Science and Computer Science
}
  \maketitle
} \fi

\if1\blind
{
  \bigskip
  \bigskip
  \bigskip
  \begin{center}
    {\LARGE\bf Random Partition Models for Microclustering Tasks\par}
\end{center}
  \medskip
} \fi

%\bigskip
\begin{abstract}
Traditional Bayesian random partition models assume that the size of each cluster grows linearly with the number of data points.
While this is appealing for some applications, this assumption is not appropriate for other tasks such as entity resolution, modeling of sparse networks, and DNA sequencing tasks.
Such applications require models that yield clusters whose sizes grow sublinearly with the total number of data points --- the \emph{microclustering property}.
Motivated by these issues, we propose a general class of random partition models that satisfy the microclustering property with well-characterized theoretical properties. 
Our proposed models overcome major limitations in the existing literature on microclustering models, namely {a} lack of interpretability, identifiability, and full characterization of model asymptotic properties. Crucially, we drop the classical assumption of having an exchangeable sequence of data points, and instead assume an exchangeable sequence of clusters. 
In addition, our framework provides flexibility in terms of the prior distribution of cluster sizes, computational tractability, and applicability to a large number of microclustering tasks.
We establish theoretical properties of the resulting class of priors, where we characterize the asymptotic behavior of the number of clusters and of the proportion of clusters of a given size. Our framework allows a simple and efficient Markov chain Monte Carlo algorithm to perform statistical inference.
We illustrate our proposed methodology on the microclustering task of entity resolution, where we provide a simulation study and real experiments on survey panel data. 
\end{abstract}

\noindent%
{\it Keywords:}  microclustering, entity resolution, record linkage, Bayesian random partition models, exchangeable partition probability function, Gibbs partitions

\section{Introduction}
%{
%\begin{enumerate}
%\item We use observations and data points. We should just should one of these and not both as this is confusing. I think data points are more clear. 
%\item We use latent individual, latent entity.  We should just should one of these and not both as this is confusing. (We also use entity and individual for the same thing. We should use just one). I prefer entity since we are doing entity resolution. 
%\end{enumerate}
%}
Classical prior distributions for Bayesian random partition models assume that the number of data points in each cluster grows linearly with the total number of data points. Traditional examples include finite mixture models, Dirichlet process priors, and 
%in general 
any model that assumes infinite exchangeability of the observed sequence of the data points. While these models have been successfully used in a wide variety of clustering tasks,
%or problems (I would just recommend shortening this part
%problems and related tasks,
 there are various contexts where the linear growth assumption is undesirable. One such task is that of entity resolution (ER) (record linkage or de-duplication), which is the process of merging together noisy databases in order to remove duplicate entities \citep{winkler_2006, christen_2012}.
%An important example is entity resolution (also known as record linkage or deduplication), where one needs to merge together noisy databases and remove duplicate records often in the absence of unique identifiers \citep{winkler_2006,christen_2012}.
%\citep{getoor_2006,winkler_2006,dai_2011, christen_2012, gutman_2013, Christen08, BilenkoMooney03, Hsuetal00, McCallumWellner04, Murphyetal07, Milleretal00, SariyarBorg10, Sariyaretal12, Ball00ElSalvador, LumPriceBanks13, Larsen02, Larsen05, Sadinle14, Jewelletal13, Cohenetal03}.\footnote{\gzc{BEKA COULD YOU SELECT SOME MEANINGFUL AND STRATEGIC ENTITY RESOLUTION REFERENCES HERE, MAYBE 5 OR MAX 10 OF THOSE? OR MAYBE DISTRIBUTE SOME RECORD LINKAGE  REFERENCES ACROSS THE NEXT SENTENCES?}}
Entity resolution can be approached as a clustering task, 
where each entity is implicitly associated with one or more records and the inferential goal is to recover the true latent entities (clusters) that correspond to the observed records (data points)~\citep{copas_1990, liseo_2011, Sadinle14, gutman_2013, steorts15entity, steorts16jasa, zhang2015principled}. 
In this context, the number of data points in each cluster remains small, even for large data sets. 
For example, if there are duplicate records in a database or entities represented in multiple databases, then identifying such duplicates should yield a large number of clusters, many containing only a few data points.
As a result, entity resolution tasks
%tasks such as entity resolution 
require models that yield clusters whose sizes grow sublinearly with the total number of data points.
We refer to this as {the \emph{microclustering property},} following the terminology of \citet{miller15microclustering} and \citet{zanella2016microclustering}. For a formal definition of the microclustering property, refer to Definition \ref{def:micro}. For a review of Bayesian methods on entity resolution, refer to \cite{liseo_2013}.

While our proposed methodology for microclustering models is general, we apply it to an entity resolution task due to the growing importance and demand for new methodology in the literature. Entity resolution is not only a crucial task for industrial and social science applications, but is a challenging statistical and computational problem itself because many databases contain errors (noise, lies, omissions, duplications, etc.), and the number of parameters to be estimated grows with the number of records \citep{christen_2012, gutman_2013, Christen08, Larsen02, Larsen05, Sadinle14,  Cohenetal03}. %Jewelletal13,
%Many modern methods approach entity resolution as a clustering task based on Bayesian nonparametric (BNP) prior distributions. 
%However, \gzc{standard} prior distributions in BNP are \gzc{not appropriate} for entity resolution, in that the sizes of the clusters grow \gzc{linearly} with the number of records \citep{sethuraman94constructive, ishwaran03generalized}.
Moreover, models for microclustering are relevant to many other applications beyond entity resolution including DNA sequencing \citep{Rashtchian17}, few-to-few matching in language processing \citep{Jitta17}, spatial clustering of settlements in historical research \citep{zanella2015}, and analysis of sparse networks \citep{Bloem-Reddy18}, among others. This motivates us to develop random partition models (RPMs) that satisfy the microclustering problem with well-characterized theoretical properties. 
%{Finally, our methodology can easily be used in a downstream task, such as regression, binary classification, among other settings \citep{lahiri2005regression, zhang2019secondary, liseo_2011}.}

%{I greatly think in order to get a major revision, we should make this into a list like I did before or else I think this is going to look to weak to the AE/reviewer. We can always move this later. This is just something I recommend for our submission.}\\

Our main contribution is proposing a flexible class of RPMs that exhibits the \emph{microclustering property} and providing simple and precise characterization of its theoretical properties. Our model overcomes major limitations in the literature such as lack of
%Specifically, prior work on microclustering models {lacks} 
interpretability, identifiability, and full characterization of the model asymptotic properties (see section~\ref{sec:microclust} for a review of related work).
Besides overcoming such limitations, our proposed framework is flexible in terms of being able to control the prior distribution of the cluster sizes. {To achieve this, we propose to use
an exchangeable sequence of clusters,}
%This is achieved by a construction based on an  exchangeable sequence of clusters, 
where prior information is placed directly on the distribution of the cluster sizes (section~\ref{sec:ESC}).
{Next, we explore the theoretical properties of this corresponding class of priors (section~\ref{sec:asymptotic_analysis}),
describing}
%are carefully explored in , 
%where we describe 
the asymptotic behavior of the number of clusters and the proportion of clusters of a given size. 
Crucially, our model is computationally tractable; and we provide closed-form expressions %for the prior distribution of the random partition of interest, which 
that allow for simple Markov chain Monte Carlo (MCMC) algorithms to perform statistical inference using our proposed framework (section~\ref{sec:Inferences}).
%that allow one to derive simple MCMC algorithms to perform statistical inference using our proposed framework (section~\ref{sec:Inferences}).
We apply our general methodology to an ER task (section~\ref{sec:RL}), where we illustrate its effectiveness through a simulation study and real experiments based on panel survey data where ground truth is available (section~\ref{sec:sim_both}).
%{We illustrate our general methodology by applying it to an entity resolution task (section~\ref{sec:RL}), where we {utilize} both simulated and real experiments from the Social Diagnosis Survey of quality of life in Poland (section~\ref{sec:experiments}.}
%We present both simulated experiments and an application to the Social Diagnosis Survey (SDS) of quality of life in Poland (section~\ref{sec:experiments}). 
Finally, we provide a discussion and directions for future work (section~\ref{sec:discussion}).

\section{Notation, Background, and Prior Work}
\label{sec:prior-work}

As already mentioned, fundamental issues arise from framing ER as a clustering {task.}
% problem. 
Traditional clustering models assume that the goal is to divide the data into a small number of high-probability clusters. Even if this {assumption} is loosened to allow for a large number of clusters which each have low probability, as in some Bayesian non-parametric (BNP) models, every cluster still has strictly positive probability. This has two consequences, as the number of observations grows. First, every cluster is observed infinitely often. Under exchangeability, observing a data point from a cluster generally makes it more probable to observe more data {points} from that cluster in the future. Second, because every cluster is observed infinitely often, the usual asymptotic theory applies to inferring cluster properties or parameters. When clusters are unique individuals in a population, however, it is not natural to assume that more data {points} 
{imply that eventually an infinite number of records will correspond to this individual.}
%always eventually means more records of the same individual.  
Rather, every cluster should be observed a strictly finite number of times, implying that uncertainty about latent individuals does shrink to zero in general. 

%Traditional clustering models assume that the goal is to divide the data into a small number of high-probability clusters. Even if this is loosened to allow for a large number of clusters which each have low probability, as in some Bayesian non-parametric (BNP) models, every cluster still has strictly positive probability. This has two consequences as the number of observations grows. First, every cluster is observed infinitely often. Under exchangeability, observing a data point from a cluster generally makes it more probable to observe more data from that cluster in the future. Second, because every cluster is observed infinitely often, the usual asymptotic theory applies to inferring cluster properties or parameters.
%When clusters are unique individuals in a population, however, it is not natural to assume that more data always eventually means more records of the same individual.  Rather, every cluster should be observed a strictly finite number of times implying that uncertainty about latent individuals does shrink to zero in general. 

In this section, we first present notation used throughout the remainder of the paper, review RPMs, and then propose a generalized microclustering framework for RPMs that can be used to address the issues that arise in ER.

\subsection{Notation}\label{sec:notation}
Consider partitioning $n$ data points $(x_1,\dots,x_n)$ in $K$ clusters $\{\sC_1,\dots,\sC_K\}$, where $K$ is crucially unknown. {That is, we divide the data points into clusters through a partition of the data points' labels $\{1,\dots,n\}$, which we denote by $\Pi_n=\{C_1,\dots,C_K\}$.} For example, the partition $\Pi_n=\{C_1,C_2,C_3\}=\{\{1,3,5\},\{2\},\{4\}\}$
corresponds to the division of data points in the following clusters: $\{\sC_1,\sC_2,\sC_3\}=\{\{x_1,x_3,x_5\},\{x_2\},\{x_4\}\}$. {In addition, we can describe the partition $\Pi_n$
using $n$ cluster allocation variables $(z_1,\dots,z_n)$, where $z_i=k\;$ if and only if $\;i\in C_k$.} In the remainder of the paper, the symbol $[n]$ denotes the set $\{1,\dots,n\}$. For more details, see 
\cite{pitman2006combinatorial}.

%The division of data points into clusters {is} represented through a partition of the data points' labels $\{1,\dots,n\}$, which we denote by $\Pi_n=\{C_1,\dots,C_K\}$.
%For example, the partition $\Pi_n=\{C_1,C_2,C_3\}=\{\{1,3,5\},\{2\},\{4\}\}$
%corresponds to the division of data points in the following clusters: $\{\sC_1,\sC_2,\sC_3\}=\{\{x_1,x_3,x_5\},\{x_2\},\{x_4\}\}$.
%In addition, the partition $\Pi_n$ can also be described using $n$ cluster allocation variables $(z_1,\dots,z_n)$, where $z_i=k$ if and only if $i\in C_k$.
%In the remainder of the paper, the symbol $[n]$ denotes the set $\{1,\dots,n\}$.
%We first provide some background on RPMs.
%For more details, see 
%\cite{pitman2006combinatorial}.
%For more details we refer the reader to 
%\cite{pitman2006combinatorial}.

\subsection{Background on random partitions}
\label{sec:random_partitions}
{In this section, we provide background on RPMs.}
%[[Random partition models, EPPF, distribution of cluster sizes and frequencies. Notation. Gibbs partition (Gnedin Pitman 2005)]]
%Within In the Bayesian framework, 
Let $\Pi_n$ 
%is treated as 
be an unknown random partition {of the set} $[n]$ with some prior distribution.
%Using a Bayesian approach, $\Pi_n$ is an unknown random partition of $[n]$, and we provide a prior distribution to it.
{We assume} the prior distribution of $\Pi_n$ is exchangeable over the set $[n]$, or rather, invariant to permutations of the labels $1,\dots,n$.
%This corresponds to assuming that, a priori, the position of each $x_i$ in the vector $(x_1,\dots,x_n)$ does not influence its probability to be in the same cluster with other $x_j$'s.
%This corresponds to assuming that, a priori, the probability of two datapoints $x_i$ and $x_j$ to be in the same cluster is not influenced by their position in the vector $(x_1,\dots,x_n)$.
This corresponds to assuming that the ordering of the data points $(x_1,\dots,x_n)$ is arbitrary and carries no information about the underlying partition.
It follows by exchangeability over $[n]$ that the prior distribution of $\Pi_n$ can be characterized through its exchangeable partition probability function (EPPF) denoted by $p^{(n)}$ and defined by 
\begin{equation}\label{eq:EPPF_defi}
\P(\Pi_n=\{C_1,\dots,C_K\})=p^{(n)}(S_1,\dots,S_K)\,,
\end{equation}
where $S_j=|C_j|$ is the number of elements in cluster $C_j$.
The  EPPF $p^{(n)}$ is a symmetric function whose input is a so-called partition of $n$ (i.e,\ a collection of positive integers $\{s_1,\dots,s_k\}$ such that $\sum_{j=1}^ks_j=n$) and whose output is a probability value (see section~2.1 of \citealp{pitman2006combinatorial}).
Another consequence of exchangeability over $[n]$ %that will be useful in the following 
is that the distribution of $\Pi_n$ is uniquely determined by the distribution of its cluster sizes $\{S_1,\dots,S_K\}$. %, where also $K$ is random. %, where $S_k=|C_k|$.
Specifically, one can generate $\Pi_n$ by first sampling the cluster sizes $\{S_1,\dots,S_K\}$, where $K$ is random, and then defining the cluster allocation variables $(z_1,\dots,z_n)$ by drawing a vector uniformly at random from the set of permutations of 
\begin{equation}\label{eq:indices_order}
(\underbrace{1,\ldots,1}_\text{$S_1$
  times},\underbrace{2,\ldots,2}_\text{$S_2$
  times},\ldots\ldots,\underbrace{K,\ldots,K}_\text{$S_K$ times}).
\end{equation}
%The random partition obtained from the above procedure has the same distribution as $\Pi_n$.
Therefore, it is sufficient to specify a model for $\{S_1,\dots,S_K\}$ to uniquely determine the distribution of $\Pi_n$. 
See \citet[Sec.1.2 and Sec.2.1]{pitman2006combinatorial} for further details on the connection between exchangeable random partitions of $[n]$ and the so-called random partitions of $n$.
%In the literature, the unordered list of cluster sizes $\{S_1,\dots,S_K\}$ is often referred as partition of $n$.
%This will be helpful in the following because various that will be considered can be more naturally described as a distribution over the cluster sizes $\{S_1,\dots,S_K\}$. 

One important class of exchangeable random partitions of $[n]$ is that of Gibbs partitions \citep{GnedinPitman06}. 
The distribution of Gibbs partitions is characterized by two nonnegative sequences, $\boldsymbol{v}=(v_s)_{s=1}^\infty$ and $\boldsymbol{w}=(w_s)_{s=1}^\infty$, and is denoted by $Gibbs_{[n]}(\boldsymbol{v},\boldsymbol{w})$.
The EPPF of a random partition $\Pi_n\sim Gibbs_{[n]}(\boldsymbol{v},\boldsymbol{w})$ is given by
%The EPPF of a Gibbs partition $\Pi_n\sim Gibbs_{[n]}(\boldsymbol{v},\boldsymbol{w})$
%% and it is denoted by $\Pi_n\sim Gibbs_{[n]}(\boldsymbol{v},\boldsymbol{w})$.
%%such that the EPPF of the random partition $\Pi_n$
%$\Pi_n\sim Gibbs_{[n]}(\boldsymbol{v},\boldsymbol{w})$
%%$\Pi_n\sim Gibbs_{[n]}(v,w)$
%%of the EPPF of the random partition $\Pi_n$ 
%is given by
\[
p^{(n)}(s_1,\dots,s_k)=\dfrac{1}{Z}v_k\prod_{\ell=1}^{k}w_{s_\ell}\,,
\]
where $Z$ is a normalizing constant depending on $n$, $\boldsymbol{v}$ and $\boldsymbol{w}$.
Gibbs partitions allow for a constructive representation, which is crucial to the models proposed in section~\ref{sec:microclust}.
%In the second formulation, 
%Gibbs partitions can also be described in the following way:
%First, starting from $\boldsymbol{v}$ and $\boldsymbol{w}$, define two probability distribution on the positive integers $\kappa=(\kappa_s)_{s=1}^\infty$ and $\mu=(\mu_s)_{s=1}^\infty$ as 
%$\kappa_s=(\sum_{\ell=1}^n\frac{w_\ell}{\ell!})^{-1}\frac{w_s}{s!}$ and $\mu_s=(\sum_{\ell=1}^n\frac{v_\ell}{\ell!})^{-1}\frac{v_s}{s!}$ for $s\leq n$ and $\kappa_s=\mu_s=0$ for $s>n$.\footnote{Need to check the expressions for $\kappa$ and $\mu$.}
%
%First, starting from $\boldsymbol{v}$ and $\boldsymbol{w}$, define two probability distribution on the positive integers $\kappa=(\kappa_s)_{s=1}^\infty$ and $\mu=(\mu_s)_{s=1}^\infty$ as 
%$$
%\kappa_s%=\left(\sum_{\ell=1}^n\frac{w_\ell}{\ell!}\right)^{-1}\frac{w_s}{s!}
%=\frac{\frac{w_s}{s!}}{\sum_{\ell=1}^n\frac{w_\ell}{\ell!}}
%%=\frac{1}{\sum_{\ell=1}^n\frac{w_\ell}{\ell!}}\frac{w_s}{s!}
%\quad\hbox{ and }\quad
%\mu_s%=(\sum_{\ell=1}^n\frac{v_\ell}{\ell!})^{-1}\frac{v_s}{s!}
%=\frac{\frac{v_s}{s!}}{\sum_{\ell=1}^n\frac{v_\ell}{\ell!}}
%\qquad
%\hbox{ for }s\leq n
%$$
%and $\kappa_s=\mu_s=0$ for $s>n$.\footnote{Need to check the expressions for $\kappa$ and $\mu$.}
Given $\boldsymbol{v}$ and $\boldsymbol{w}$, define two probability distributions on the positive integers $\bm{\kappa}=(\kappa_s)_{s=1}^\infty$ and $\bm{\mu}=(\mu_s)_{s=1}^\infty$ as %\footnote{Need to check the expressions for $\kappa$ and $\mu$.}
$$
\kappa_s%=\left(\sum_{\ell=1}^n\frac{w_\ell}{\ell!}\right)^{-1}\frac{w_s}{s!}
=\frac{1}{Z_{\boldsymbol{w}}}\frac{w_s}{s!}
%=\frac{1}{\sum_{\ell=1}^n\frac{w_\ell}{\ell!}}\frac{w_s}{s!}
\quad\hbox{ and }\quad
\mu_s%=(\sum_{\ell=1}^n\frac{v_\ell}{\ell!})^{-1}\frac{v_s}{s!}
=\frac{1}{Z_{\boldsymbol{v}}}\frac{v_s}{s!}
%\qquad
%\hbox{ for }s\leq n
$$
where $Z_{\boldsymbol{w}}=\sum_{\ell=1}^\infty\frac{w_\ell}{\ell!}$ and $Z_{\boldsymbol{v}}=\sum_{\ell=1}^\infty\frac{v_\ell}{\ell!}$ are normalizing constants. Note that $Z_{\boldsymbol{w}}$ and $Z_{\boldsymbol{v}}$ can always be made finite without changing the distribution of $\Pi_n\sim Gibbs_{[n]}(\boldsymbol{v},\boldsymbol{w})$ by assuming $w_s=v_s=0$ for $s> n$.
Given $\bm{\kappa}$ and $\bm{\mu}$, define the random variables $K$ and $\{S_1,\dots,S_K\}$ as 
\begin{align}
K&\sim \bm{\kappa}\label{eq:kolchin1}\\
S_1,\dots S_K|K&\stackrel{iid}\sim \bm{\mu}\,.\label{eq:kolchin2}
\end{align}
Then, the distribution of $\{S_1,\dots,S_K\}$ conditional on $\sum_{j=1}^K S_j=n$ is the same as
the marginal distribution of the cluster sizes of $\Pi_n\sim Gibbs_{[n]}(\boldsymbol{v},\boldsymbol{w})$, see e.g.\ \citet[Thm. 2.1]{pitman2006combinatorial} or Appendix A.1 of \cite{zanella2016microclustering}.
This construction of Gibbs partitions is known as the Kolchin representation \citep{kolchin1971problem}.

A common assumption in the literature is that the observed data points $(x_1,\dots,x_n)$ are the first $n$ elements of an infinite sequence $(x_1,x_2,\dots)$ of exchangeable data points.
In terms of random partitions, this means that $\Pi_n$ can be obtained by restricting some exchangeable random partition of $\mathbb{N}=\{1,2,\dots\}$ to the set $[n]$.
This is equivalent to requiring the two following conditions: (a) each random partition $\Pi_{n}$ is exchangeable over $[n]$; (b) the sequence of random partitions $(\Pi_n)_{n=1}^\infty$ is projective (or Kolmogorov consistent), meaning that $\Pi_{n}$ is equal in distribution to the restriction of $\Pi_{m}$ to $[n]$ for $1\leq n<m$.
%, denoted by $\Pi_\infty$.
%In terms of EPPF, this corresponds to assuming that $p^{(n)}$ satisfies an addition rule (see e.g.\ Pitman eq.(2.9)) that ensures $\Pi_n$ to be the finite-dimensional projection of $\Pi_\infty$.
We refer to (a) as \emph{finite} exchangeability, to (b) as \emph{projectivity} and to their combination as \emph{infinite} exchangeability.
%We refer to this assumption as \emph{infinite} exchangeability, to distinguish it from only assuming exchangeability over $[n]$, which we refer to as \emph{finite} exchangeability. 
%In the context of entity resolution, $\Pi_n$ is the partition of records into clusters, where each cluster corresponds to an entity and contains all records (data points) relative to that entity.
In this paper, we do not assume infinite exchangeability to hold because this assumption is not appropriate for microclustering  tasks, see sections \ref{sec:microclust} and \ref{sec:previous} for more details.

\subsection{The microclustering property}\label{sec:microclust}
In this section, we present the microclustering property and review previous microclustering models before introducing our proposed methodology. 
It is well known that under the assumption of infinite exchangeability, the size of each cluster grows linearly with the number of data points $n$. However, as discussed in \citet{miller15microclustering, zanella2016microclustering}, this linear growth assumption is inappropriate for some clustering tasks (such as ER). Motivated by this problem, the 
microclustering property {is defined in the aforementioned work as follows: }
%authors define the following microclustering property.
\begin{definition}
\label{def:micro}
%A sequence of random partitions $(\Pi_n)_{n=1}^\infty$ satisfies the \emph{microclustering property} if the size $M_n$ of the largest cluster of $\Pi_n$ grows sublinearly with $n$.
A sequence of random partitions $(\Pi_n)_{n=1}^\infty$ satisfies the microclustering property if $M_n/n\to0$ in probability as $n\to\infty$, {where $M_n$ is the size of the  largest cluster of the random partition $\Pi_n$.}
%where the size $M_n$ of the largest cluster of $\Pi_n$.
\end{definition}

\citet{miller15microclustering, zanella2016microclustering} then proposed two RPMs %for $(\Pi_n)_{n=1}^\infty$ 
that exhibit the microclustering property, the negative binomial-negative binomial (NBNB) model and the negative binomial-dirichlet (NBD) model. Both models belong to the class of Gibbs partitions and their formulation is based on the Kolchin representation in equations \eqref{eq:kolchin1}-\eqref{eq:kolchin2}. The NBNB model is obtained by assuming that both $\bm{\kappa}$ and $\bm{\mu}$ 
belong to the Negative-Binomial family, i.e.\ $\bm{\kappa}=NegBin(a,q)$ and $\bm{\mu}=NegBin(r,p)$ for some $a,r>0$ and $q,p\in(0,1)$.
The NBD model is {inherently} more flexible as it models $\bm{\mu}$ as a random probability vector with a Dirichlet distribution prior.
The results in \citet{miller15microclustering, zanella2016microclustering} suggest that the sequence of random partitions $(\Pi_n)_{n=1}^\infty$ induced by the NBNB and NBD model exhibit the microclustering property and that these models provide promising prior distributions for ER tasks.
Nevertheless, both the NBNB and NBD models suffer from the following limitations:
\begin{enumerate}
\item \textbf{Interpretability of $\bm{\kappa}$ and $\bm{\mu}$.}
{Turning to the generation process in equations~\eqref{eq:kolchin1} and \eqref{eq:kolchin2},} one may be tempted to interpret $\bm{\kappa}$ as the prior distribution of the number of clusters of $\Pi_n$, which we denote by $K$, and $\bm{\mu}$ as the prior distribution of the size of a randomly chosen cluster from $\Pi_n$, which we denote by $S_j$.
Instead, the actual prior distributions of $K$ and $S_j$ {may be} arbitrarily far from $\bm{\kappa}$ and $\bm{\mu}$.
The reason for such a discrepancy is that the generation process of $\Pi_n$ involves conditioning on $\sum_{j=1}^K S_j=n$, which can dramatically modify the distributions of $K$ and $S_j$. {This makes it difficult to interpret the parameters $\bm{\kappa}$ and $\bm{\mu}$ in the NBNB and NBD models and to specify appropriate prior distributions for these parameters.}
%{In addition, it may be difficult to 
%specify appropriate prior distributions for these parameters based on prior knowledge about the behavior of $K$ and $S_j$.}
\item \textbf{Statistical Identifiability.} The parameters $\bm{\kappa}$ and $\bm{\mu}$ used in the Kolchin representation are not statistically identifiable. Thus, one can modify $\bm{\kappa}$ and $\bm{\mu}$ without changing the distribution of the resulting random partition $\Pi_n$.
For example, for any $c>0$ consider the probability vectors $\bm{\kappa}^{(c)}=(\kappa^{(c)}_s)_{s=1}^\infty$ and $\bm{\mu}^{(c)}=(\mu^{(c)}_s)_{s=1}^\infty$ defined as $\mu^{(c)}_s=c^s \mu_s/Z_{\bm{\mu}}$ and $\kappa^{(c)}_s=Z_{\bm{\mu}}^s\kappa_s/Z_{\bm{\kappa}}$ for all $s\geq 1$, where $Z_{\bm{\mu}}=\sum_{j=1}^\infty c^j\mu_j$ and $Z_{\bm{\kappa}}=\sum_{j=1}^\infty Z_{\bm{\mu}}^j\kappa_j$  are normalizing constants. % required for $\bm{\kappa}^{(c)}$ and $\bm{\mu}^{(c)}$ to be probability vectors.
The random partitions obtained from the Kolchin representation using $(\bm{\kappa}^{(c)},\bm{\mu}^{(c)})$ have the same distribution for any value of $c>0$ (see also \citealp[Sec.1.5]{pitman2006combinatorial}).
\item  \textbf{Asymptotic Properties.} In order to understand the prior assumptions underlying the NBNB and NBD models, one requires a clear characterization of the asymptotic properties of $\Pi_n$ for large $n$, such as the behavior of the number of clusters or the size of a randomly chosen cluster. Currently, there is no such understanding for the NBNB and NBD models.
Moreover, the microclustering property for the NBNB and NBD model is only shown to hold heuristically and proven for {some special cases} \citep{zanella2016microclustering}. 
%{A lack of a fully-grounded theoretical understanding is due to working directly with the Kolchin representation, which involves conditioning on the event $\sum_{j=1}^K S_j=n$ whose probability vanishes as $n\to \infty$.}
Such a lack of theoretical results is due to the technical difficulty of working directly with the Kolchin representation, which involves conditioning on the event $\sum_{j=1}^K S_j=n$ whose probability vanishes as $n\to \infty$.
\item \textbf{Parameters varying with $n.$} 
\citet{miller15microclustering, zanella2016microclustering} assume the parameter $\bm{\kappa}$ to be fixed as $n\to\infty$. 
However, this assumption is unrealistic, and in fact, the prior number of clusters should increase with $n$.
%suggest that the NBNB and NBD models satisfy the microclustering property assume the parameter $\bm{\kappa}$ to be fixed as $n\to\infty$. 
%On the other hand, in a microclustering context, the prior number of clusters should increase with $n$.
In order to specify a reasonable prior in practical microclustering tasks, one needs to choose a different $\bm{\kappa}$ for different values of $n$ (see e.g.\ \citealp[Sec.4]{zanella2016microclustering}).
It is not clear how to choose $\bm{\kappa}$ as $n$ varies and what is the impact of such choices.
%This also leads to various undesired consequences, such as making it difficult to use the NBNB and NBD models for predictions about future observations.
\end{enumerate}
%\rev{We now propose a model that solves the four issues discussed above by avoiding the use of the Kolchin representation, and preserves computational tractability by having simple closed-form expressions for the resulting EPPF  (see e.g.\ Corollary \ref{coroll:prediction_rule}).
%A more detailed discussion of other microclustering models recently proposed in the literature and their connections with ours is postponed to Section \ref{sec:previous}.}
We now propose a model that solves the four aforementioned issues. In addition, a more detailed discussion of other work on microclustering models can be found in section \ref{sec:previous}.

\section{Exchangeable Sequences of Clusters (ESC)}\label{sec:ESC}
In this section, we propose a flexible and tractable %framework to specify a 
prior distribution for a random partition $\Pi_n$ that is appropriate for microclustering tasks, such as ER. 
As already mentioned, our proposed methods provide a solution to the four issues of microclustering models in the existing literature, while preserving computational tractability by having simple closed-form expressions for the resulting EPPF  (see e.g.\ Corollary \ref{coroll:prediction_rule}).
To achieve these goals, we move away from the traditional assumption of having an exchangeable sequence of data points, $(x_1,x_2,\dots)$ and instead assume that $(\sC_1,\sC_2,\dots)$ is an  exchangeable sequence of clusters with finite sizes.
%Our proposed model assumes that  $(\sC_1,\sC_2,\dots)$ is an exchangeable sequence of clusters with finite sizes. 
More formally, we assume that $(\sC_1,\sC_2,\dots)$ is an exchangeable sequence of random elements taking values in $\sS=\cup_{s=1}^\infty \sX^s $, where $\sX$ is a standard Borel space representing the set of possible values of a single data point and $\sX^s$ denotes the $s$-dimensional Cartesian product $\sX\times\cdots\times \sX$.
%\rev{Here we assume $\sX$ to be a standard Borel space, in order to avoid measure-theoretic issues.}
Denote each $\sC_j$ by 
$$\sC_j=(x^{(j)}_1,\dots,x^{(j)}_{S_j})\in\sS \,,$$%\qquad j\in\N\,,$$
where $S_j$ is the size of $\sC_j$ and each $x^{(j)}_i$ is an element of $\sX$.
By De Finetti's representation theorem, there exist an $\sS$-valued random measure $G$ such that 
$
\sC_1,\sC_2,\dots|G\stackrel{iid}\sim G.
$
Given $G$, denote $\mu_s$ as the probability that a cluster sampled from $G$ has size $s$, and denote $Q_s$ as the cluster distribution conditional on having size equal to $s$.
%Define $\mu_s$ to be the probability that a cluster sampled from $G$ has size $s$, and define $Q_s$ to be the cluster distribution conditional on having size equal to $s$. %and define $Q_s$ to be the distribution of a cluster sampled from $G$ conditional on its size being equal to $s$.
Denoting $\bm{\mu}=(\mu_s)_{s=1}^\infty$ and $Q=(Q_s)_{s=1}^\infty$ , the random measure $G$ can then be decomposed %into the marginal distribution of cluster sizes $\mu$ and the conditional distribution of 
as $G=(\bm{\mu},Q)$,
%where $\mu=(\mu_1,\mu_2,\dots)$ is the marginal distribution of cluster sizes and $\sP_s$ is the conditioning of $G$ on $\sX^s$, i.e.\ $\mu_s=G(\sX^s)$ and $\sP_s=\frac{\1_{\sX^s}G}{G(\sX^s)}$.
where sampling $\sC_j\sim G$ is equivalent to first sampling $S_j\sim\bm{\mu}$ and then $\sC_j|S_j\sim Q_{S_j}$.
%and sampling $\sC_j$ from $G$ is equivalent to sampling $S_j$ from $\mu$ first and then sampling $\sC_j$ from $Q_{S_j}$.
In specifying a prior distribution for $G$, we assume $\bm{\mu}$ and $Q$ to be independent a priori with distribution $P_{\bm{\mu}}$ and $P_{Q}$, respectively.
%and denote this by $P_G=P_{\bm{\mu}}\varotimes P_{Q}$.
%We assume the law of $G$ to factorize over $\mu$ and $(Q_s)_{s=1}^\infty$, meaning that $P_G=P_{\bm{\mu}}\varotimes P_{Q}$.
%[[Typically $Q_s$ will itself have a mixture form $Q_s(A_1,\dots,A_s)=\int \left(\prod_{i=1}^sf_\phi(A_i)\right)dP_\phi$ ...]]

Given the sequence of clusters $(\sC_1,\sC_2,\dots)$, the observed data points $(x_1,\dots,x_n)$ arise as the union of a finite number of clusters. More precisely, assume that
\begin{equation}\label{eq:data_from_clusters}
\{x_i\}_{i=1}^n=\bigcup_{j=1}^{K}\sC_j\,,
\end{equation}
where $K$ is the unknown number of clusters.
By doing so, we are implicitly conditioning the sequence $(\sC_1,\sC_2,\dots)$  on the event
\begin{equation}\label{eq:defi_E_n}
%E_n=\left\{\exists k \in \N\,\hbox{ such that }\,\sum_{j=1}^k |\sC_j|=n \right\}
E_n=\left\{\hbox{there exists }k \in \N\,\hbox{ such that }\,\sum_{j=1}^k S_j=n \right\}\,.
\end{equation}
Conditional on the occurrence of the event $E_n$,
%Given equation \eqref{eq:defi_E_n}, 
the random variable $K$ is a function of $(S_1,S_2,\dots)$ defined as the unique positive integer such that $\sum_{j=1}^{K}S_j=n$.
For simplicity throughout the paper, we assume that $\mu_1>0$ almost surely under $P_{\bm{\mu}}$, so that $\P(E_n)>0$ for every $n\geq 1$.

In equation \eqref{eq:data_from_clusters}, we consider the observed data points as an \emph{unordered} set $\{x_i\}_{i=1}^n$.
The ordered vector of data points $(x_1,\dots,x_n)$ is obtained by 
%applying a uniformly at random permutation to the vector %
drawing a vector uniformly at random from the set of permutations of
\begin{equation}\label{eq:reordered_points}
(x^{(1)}_1,\dots,x^{(1)}_{S_1},\dots,x^{(K)}_1,\dots,x^{(K)}_{S_{K}})\,.
\end{equation}
This is the same procedure described in equation \eqref{eq:indices_order} in terms of the indices $\{1,\dots,n\}$.

%To summarize, we assume the observed data points $(x_1,\dots,x_n)$ to be generated by the following process:
%\begin{enumerate}[noitemsep,nolistsep]
%\item Conditional on $E_n$, sample $\bm{\mu}\sim P_{\bm{\mu}}$ and $S_1,S_2,\dots|\bm{\mu}\stackrel{iid}\sim\bm{\mu}$.
%\item Sample $Q\sim P_{Q}$.
%\item Define $K$ by $\sum_{j=1}^{K}S_j=n$ and for $j=1,\dots,K$ sample
%$$\sC_j=(x^{(j)}_1,\dots,x^{(j)}_{S_j})\sim Q_{S_j}$$
%\item Set $\x=(x_1,\dots,x_n)=(\tilde{x}_{\sigma(1)},\dots,\tilde{x}_{\sigma(n)})$ where $\sigma$ is a uniformly at random permutation of $[n]$ and 
%$\tilde{\textbf{x}}=(x^{(1)}_1,\dots,x^{(1)}_{S_1},\dots,x^{(K)}_1,\dots,x^{(K)}_{S_{K}})$.
%\end{enumerate}

Conditional on the partition of {data} points into clusters, the generation of data points is analogous to classical mixture models, where each cluster $\sC_j=(x^{(j)}_1,\dots,x^{(j)}_{S_j})$ is generated independently of the others according to some distribution $Q_{S_j}$, which depends on the cluster size.  
Therefore, the only difference between classical mixture models and ESC is in the prior distribution on the partition of points into clusters.
%We denote such distribution as $ESC_{[n]}(P_{\bm{\mu}})$. 
%In fact it is easy to see that $\Pi_n=\{C_1,\dots,C_K\}$ to be the partition of $[n]$ induced by the partition of $\x$ into the clusters $(\sC_1,\dots,\sC_K)$, meaning that $i\in C_j$ if and only if $x_i\in \sC_j$.
Following the notation of section \ref{sec:notation}, we define $\Pi_n=\{C_1,\dots,C_K\}$ to be the partition of $[n]$ induced by the partition of $(x_1,\dots,x_n)$ into the clusters $(\sC_1,\dots,\sC_K)$, meaning that $i\in C_j$ if and only if $x_i\in \sC_j$.
For the ESC models described in this section, the prior distribution of  $\Pi_n$ depends only on $P_{\bm{\mu}}$, which we 
denote by $ESC_{[n]}(P_{\bm{\mu}})$.
A random partition $\Pi_n\sim ESC_{[n]}(P_{\bm{\mu}})$ can be generated, {as defined below.}
%As a result, the above procedure induces a distribution on the set of partitions of $[n]$, which is defined by the following model.
\begin{taggedmodel}{$ESC_{[n]}(P_{\bm{\mu}})$}\label{model:esc}
Generate a random partition {of the set} $[n]$ as follows:
\begin{enumerate}[noitemsep,nolistsep]
\item Conditional on $E_n$, sample $\bm{\mu}\sim P_{\bm{\mu}}$ and $S_1,S_2,\dots|\bm{\mu}\stackrel{iid}\sim\bm{\mu}$. 
\item Define $K$ as the unique positive integer such that $\sum_{j=1}^{K}S_j=n$.
%Define $K$ by $\sum_{j=1}^KS_j=n$.
\item Define the cluster allocation variables $(z_1,\dots,z_n)$ as a uniformly at random permutation of the vector% in \eqref{eq:indices_order}.
\begin{equation}\label{eq:indices_order_2}
(\underbrace{1,\ldots,1}_\text{$S_1$
  times},\underbrace{2,\ldots,2}_\text{$S_2$
  times},\ldots\ldots,\underbrace{K,\ldots,K}_\text{$S_K$ times}).
\end{equation}
\end{enumerate}
\end{taggedmodel}

%\subsection{Induced random partition}
%\subsection{Properties of ESC random partitions {Better titile}}
\subsection{Properties of Exchangeable Sequences of Clusters (ESC)}
\label{sec:asymptotic_analysis}
In this section, we characterize fundamental properties of random partitions using our proposed ESC model in section~\ref{sec:ESC}. {These} properties allow one to perform posterior inference and examine the asymptotic properties of the random partitions of our proposed methodology. Recall that a random partition $\Pi_n\sim ESC_{[n]}(P_{\bm{\mu}})$ is exchangeable over $[n]$, and thus, its distribution can be described through the exchangeable partition probability function (EPPF) defined in equation \eqref{eq:EPPF_defi}. {Below,} we provide explicit expressions for the EPPF of $\Pi_n\sim ESC_{[n]}(P_{\bm{\mu}})$ in Propositions \ref{prop:marginal_EPPF} and \ref{prop:conditional_eppf} and Corollary \ref{coroll:prediction_rule}, which allow one to perform posterior inference using standard computational algorithms (e.g.,\ MCMC).

{First, we provide an explicit expression for the marginal EPPF}. 
\begin{prop}[Marginal EPPF]\label{prop:marginal_EPPF}
The EPPF of a random partition $\Pi_n\sim ESC_{[n]}(P_{\bm{\mu}})$ is given by
%Let $\Pi_n\sim ESC_{[n]}(P_{\bm{\mu}})$. The corresponding EPPF, $\P(\Pi_n=\{C_1,\dots,C_K\})=p^{(n)}(|C_1|,\dots,|C_K|)$, is given by
\begin{equation}\label{eq:eppf_marginal}
p^{(n)}(s_1,\dots,s_k)
=
\frac{1}{\P(E_n)}
\E_{\bm{\mu}\sim P_{\bm{\mu}}}\left[
\frac{k!}{n!}\prod_{j=1}^k s_j! \mu_{s_j}
\right]
\end{equation}
for any $k\geq 1$ and $(s_1,\dots,s_k)$ satisfying $\sum_{j=1}^ks_j=n$ and $s_j\geq 1$ for all $j$.
The set $E_n$ is defined in equation \eqref{eq:defi_E_n}.
\end{prop}
{We provide a proof of Proposition~\ref{prop:marginal_EPPF} in the {Supplementary Material}, section ~\ref{app:prop1}.}
Depending on the specific choice of $P_{\bm{\mu}}$ the integral in equation \eqref{eq:eppf_marginal} may be more or less tractable.
Thus, it may be more convenient to work with the conditional EPPF, $p^{(n)}(\cdot;\bm{\mu})$, which represents the EPPF of $\Pi_n,$ conditional on the value of the random distribution $\bm{\mu}=(\mu_s)_{s=1}^\infty$. {As such, next, we derive a closed form expression for the conditional EPPF.}

\begin{prop}[Conditional EPPF]\label{prop:conditional_eppf}
Let $\Pi_n\sim ESC_{[n]}(P_{\bm{\mu}})$.
The conditional EPPF $p^{(n)}(\cdot;\bm{\mu})$, defined as $\P(\Pi_n=\{C_1,\dots,C_K\}|\bm{\mu})=p^{(n)}(|C_1|,\dots,|C_K|;\bm{\mu})$, is given by%with $\Pi_n\sim ESC_{[n]}(P_{\bm{\mu}})$ satisfies
%$p^{(n)}(s_1,\dots,s_k;\mu)
%=
%\frac{1}{\P(E_n|\mu)}
%k!\prod_{j=1}^k s_j! \mu_{s_j}$
\begin{equation}\label{eq:eppf_conditional}
p^{(n)}(s_1,\dots,s_k;\bm{\mu})
=
\frac{1}{\P(E_n|\bm{\mu})}
\frac{k!}{n!}\prod_{j=1}^k s_j! \mu_{s_j}\,,
\end{equation}
for any $k\geq 1$ and $(s_1,\dots,s_k)$ satisfying $\sum_{j=1}^ks_j=n$ and $s_j\geq 1$ for all $j$.
\end{prop}
We provide a proof of Proposition~\ref{prop:conditional_eppf} in the {Supplementary Material}, section ~\ref{app:prop2}.
Note that equation \eqref{eq:eppf_conditional} implies that, for fixed $\bm{\mu}$, ESC models fall within the framework of finitely exchangeable Gibbs partitions (section~\ref{sec:random_partitions}). These conditional EPPFs can be used to implement Gibbs samplers and other algorithms 
for prior/posterior sampling for our proposed ESC models.
We now derive the corresponding reallocation probabilities associated with the conditional EPPF $p^{(n)}(\cdot;\bm{\mu})$.

\begin{coroll}[Reallocation probabilities]\label{coroll:prediction_rule}
Let $\z=(z_1,...,z_n)$ be the cluster allocation variables of $\Pi_n\sim ESC_{[n]}(P_{\bm{\mu}})$.
For any $i=1,\dots,n$, the conditional distribution of $z_i$ given $\z_{-i}=\z\backslash z_i$ and $\bm{\mu}$ is
%The corresponding prediction rule is given by
\begin{equation}\label{eq:prediction_rule_conditional}
\P(z_i=j|\z_{-i},\bm{\mu})
\propto
\left\{
\begin{array}{ll}
(s_j+1)
\frac{\mu_{(s_j+1)}}{\mu_{s_j}}& \hbox{if }j=1,\dots,k_{-i},
\\
(k_{-i}+1)\mu_1 & \hbox{if }j=k_{-i}+1\,,
\end{array}
\right.
\end{equation}
where %$\z=(z_1,...,z_n)$ are the cluster allocation variables, 
%$\z_{-i}=\z\backslash z_i$ and 
$k_{-i}$ is the number of clusters in $\z_{-i}$.
\end{coroll}
{We provide a proof of Corollary~\ref{coroll:prediction_rule} in the {Supplementary Material}, section ~\ref{app:prop2}.}

\subsection{{Asymptotic Behavior of ESC}}
\label{sec:title}
In this section, we characterize the asymptotic behavior of ESC random partitions (Theorems \ref{theorem:Kn}-\ref{theorem:microclustering}), which is critical to understanding the assumptions under our proposed methodology. Specifically, we provide characterizations of the limiting behavior of the number of clusters in $\Pi_n$, give the limiting distribution of the proportion of clusters of a given size, and prove the microclustering property. In the remainder of this section, we assume $\Pi_n$ is a random partition of $[n]$ generated by the ESC model with fixed $\bm{\mu}$, which we denote by $\Pi_n\sim ESC_{[n]}(\bm{\mu})$. Below, we prove a result that concerns the behavior of the number of clusters of $\Pi_n$ as $n$ increases. 

%Second, we characterize the asymptotic properties of the random partitions under consideration (Theorems \ref{theorem:Kn}-\ref{theorem:microclustering}), which is crucial to understand the assumptions underlying the proposed model.
%We now provide some results on the asymptotic behavior of ESC random partitions.
%Our results include characterizations of the limiting behavior of the number of clusters in $\Pi_n$, the limiting distribution of the proportion of clusters of a given size, and proving the microclustering property.
%In the remainder of this section, we assume $\Pi_n$ is a random partition of $[n]$ generated by the ESC model with fixed $\bm{\mu}$, which we denote by $\Pi_n\sim ESC_{[n]}(\bm{\mu})$. First, we prove a result that concerns the behavior of the number of clusters of $\Pi_n$ as $n$ increases.

\begin{theorem}[Number of clusters]\label{theorem:Kn}
Assume $\sum_{s=1}s\mu_s< \infty$ and let $K_{n}$ be the number of clusters of $\Pi_n$.
%a random partition $\Pi_n\sim ESC_{[n]}(\bm{\mu})$.
As $n\rightarrow\infty $ it holds
\begin{align}\label{eq:limit_K}
\frac{K_n}{n}&\stackrel{p}\rightarrow \left(\sum_{s=1}^\infty s\mu_s\right)^{-1}\,,
\end{align}
where $\stackrel{p}\rightarrow$ denotes convergence in probability.
%The convergence in \eqref{eq:limit_K} holds even if $\sum_{s=1}^\infty s\mu_s=\infty$, in which case $\left(\sum_{s=1}^\infty s\mu_s\right)^{-1}$ equals $0$.
\end{theorem}
{We provide a proof of Theorem \ref{theorem:Kn} in the {Supplementary Material}, section~\ref{app:thm}.}

Theorem \ref{theorem:Kn} implies that, if the mean of $\bm{\mu}$ is finite (i.e.\ $\sum_{s=1}^\infty s\mu_s< \infty$) the number of clusters of $\Pi_n$ grows linearly with the number of data points $n$.
This corresponds to the behavior empirically observed, e.g., in ER tasks. 
Next, we provide a characterization of the limiting distribution of the number of clusters of $\Pi_n$ with a given size.
\begin{theorem}[Proportion of clusters of given size] \label{theorem:fof}
Assume %$\Pi_n\sim ESC_{[n]}(\mu)$ and %
$\sum_{s=1}s\mu_s< \infty$.
\begin{enumerate}
\item[(a)] Let $M_{s,n}$ be the number of clusters of size $s$ in $\Pi_n$. 
As $n\rightarrow\infty $
$$
\frac{M_{s,n}}{n}\stackrel{p}\rightarrow \frac{\mu_s}{\sum_{\ell=1}^{\infty}\ell\mu_\ell}\,.
$$
\item[(b)] 
The size $S_j$ of a cluster chosen uniformly at random from the clusters of $\Pi_n$ converges in distribution to $\bm{\mu}$ as $n\rightarrow\infty $.
\end{enumerate}
\end{theorem}
{We provide a proof of Theorem \ref{theorem:fof} in the {Supplementary Material}, section~\ref{app:thm} .}

Compared to Theorem \ref{theorem:Kn}, the latter result provides a more refined characterization of the structure of a typical draw of ESC random partitions. In particular, Theorem \ref{theorem:fof} implies that, asymptotically, the distribution of the size of a randomly chosen cluster from $\Pi_n$ coincides with $\bm{\mu}$, which provides a natural interpretation for such a parameter.
This allows one to specify appropriate prior distributions for $\Pi_n$, by incorporating into $P_{\bm{\mu}}$ any knowledge about the expected behavior of the sizes of the clusters of $\Pi_n$, and also 
to extract meaningful information from the posterior distribution of $\bm{\mu}$.
As discussed in section \ref{sec:microclust}, a result analogous to Theorem \ref{theorem:fof} does not hold for models based on the Kolchin representation. Thus, interpreting $\bm{\mu}$ as the distribution of the size of a randomly chosen cluster from $\Pi_n$ is not appropriate for models such as NBNB or NBD.

Finally, we provide a proof of the microclustering property for ESC random partitions.
This result requires $\bm{\mu}$ to have finite mean, i.e.\ $\sum_{s=1}^\infty s\mu_s< \infty$.
\begin{theorem}[Microclustering]\label{theorem:microclustering}
Let $M_{n}$ be the size of the largest cluster in $\Pi_n$ and assume $\sum_{s=1}^\infty s\mu_s< \infty$.
As $n\rightarrow\infty $
$$
\frac{M_{n}}{n}\stackrel{p}\rightarrow 0\,.
$$
\end{theorem}
The proof of Theorem \ref{theorem:microclustering} can be found in section \ref{app:thm} of the supplement.

\subsection{Model specification and inferences}\label{sec:Inferences}
In this section, we specify two instances of ESC models that will be used both in a simulation study and a real application.
In addition, we provide simple-to-implement MCMC schemes to perform posterior inferences.
The two models, which we refer to as ESC-NB and ESC-D, differ in the way the prior distribution $P_{\bm{\mu}}$ is specified. Recall that $P_{\bm{\mu}}$ is a distribution over probability distributions $\bm{\mu}=(\mu_s)_{s=1}^\infty$ on the positive integers. 

\subsubsection{ESC-NB model}\label{sec:ESCNB}
First, we model $\bm{\mu}=(\mu_s)_{s=1}^\infty$ as a Negative Binomial distribution truncated on $\{1,2,\dots\}$, denoted by $\bm{\mu}=NegBin(r,p).$ {This implies that for all $s=1,2,\dots$}
%is the following deterministic function of $r$ and $p$}
%We model $\bm{\mu}$ as a Negative Binomial distribution truncated on $\{1,2,\dots\}$, $\bm{\mu}=NegBin(r,p)$, meaning that, for all $s=1,2,\dots$, $\mu_s$ is the following deterministic function of $r$ and $p$
\begin{equation}\label{eq:mu_esc_NB}
%\mu_s(r,p)=\frac{\Gamma(s+r)p^s(1-p)^{r}}{(1-(1-p)^r)\Gamma(r)s!}\,,
%\mu_s(r,p)=\frac{\Gamma(s+r)p^s (1-p)^{r}}{(1-(1-p)^r)\Gamma(r)s!}\,,
\mu_s(r,p)=\gamma\frac{\Gamma(s+r)p^s }{\Gamma(r)s!}\,,
\end{equation}
where $\gamma=\dfrac{(1-p)^{r}}{1-(1-p)^r}$.
{Here, $r>0$ and $p\in(0,1)$ are unknown parameters with prior distributions}
$$r \sim \text{Gamma}(\eta_r,s_r)\; \text{and} \; p \sim \Beta(u_p,v_p),$$
where the {fixed hyperparameters} $\eta_r$, $s_r$, $u_p$ and $v_p$ are chosen to reflect the prior expectations on the distribution of the cluster sizes $S_j$ for the application under consideration (see e.g.\ section~\ref{sec:experiments}).
We refer to the resulting model as ESC-NB.

{When the ESC-NB model is combined with a likelihood function} $\Pr(\x|\Pi_n)$, where $\x=(x_1,\dots,x_n)$, (e.g.\ the one specified in section \ref{sec:RL}), {this combination leads to a}  joint posterior distribution %of $\Pi_n$, $r$ and $p$, denote by 
$\Pr(\Pi_n,r,p|\x)$.
Using the results of Corollary \ref{coroll:prediction_rule}, it is straightforward to sample from $\Pr(\Pi_n,r,p|\x)$ using an MCMC scheme that iterates the following steps:
\begin{enumerate}
\item Update $(r,p)|\Pi_n,\x$ with a MCMC kernel that is invariant with respect to the following conditional distribution
\begin{align}\label{eq:cond_par_ESC_NB}
\Pr(r,p|\Pi_n,\x) &
\propto
r^{\eta_r-1}e^{-\frac{r}{s_r}}
p^{n+ u_p-1}(1-p)^{ v_p-1}
\gamma^K\prod_{j=1}^K\frac{\Gamma(S_j+r)}{\Gamma(r)}\,.
\end{align}
where $K$ is the number of clusters in $\Pi_n$ and $S_j$ the size of the $j$-th cluster.
\item Update $\Pi_n|\x,r,p$ with a MCMC kernel (e.g.\ Gibbs Sampling) using the following full conditionals for the allocation variables
\begin{equation}\label{eq:full_cond_ESCNB}
\P(z_i=j|\z_{-i},\x,r,p)
\propto
\Pr(\x|\z_{-i},z_i=j)\times
\left\{
\begin{array}{ll}
%(s_j+1)\frac{\mu_{(s_j+1)}}{\mu_{s_j}}=
S_j+r & \hbox{if }j=1,\dots,K_{-i},
\\
(K_{-i}+1)\gamma r & \hbox{if }j=K_{-i}+1\,
\end{array}
\right.
\end{equation}
where $K_{-i}$ is the number of clusters in $\z_{-i}$.
\end{enumerate}
See section \ref{supp:samplers} of the supplement for the derivation of equations \eqref{eq:cond_par_ESC_NB} and \eqref{eq:full_cond_ESCNB}.
In our simulation study and real data applications, we use slice sampling \citep{neal2003slice} to perform Step 1 and the \emph{chaperones algorithm} described in \cite{miller15microclustering} to perform Step 2. 

\subsubsection{ESC-D model}\label{sec:ESCD}
The assumption underlying the ESC-NB model that $\bm{\mu}$ coincides with a Negative Binomial distribution is potentially restrictive.
A more flexible choice is to model directly $\bm{\mu}=(\mu_s)_{s=1}^\infty$ as a random distribution and assign a prior distribution to it, such as a Dirichlet distribution $\bm{\mu}\sim Dir(\alpha,\bm{\mu}^{(0)})$, where $\bm{\mu}^{(0)}=(\mu^{(0)}_s)_{s=1}^\infty$ is a sequence of non-negative numbers satisfying $\sum_{s=1}^\infty\mu^{(0)}_s=1$.
We impose a parametric form on $\bm{\mu}^{(0)}$ in an analogous way to the ESC-NB model, i.e.,\ $\bm{\mu}^{(0)}$ is truncated Negative Binomial distribution with
\begin{equation}\label{eq:mu0_esc}
\mu^{(0)}_s(r,p)=\gamma\frac{\Gamma(s+r)p^s }{\Gamma(r)s!}\,,
\end{equation}
where $r$ and $p$ are unknown parameters {with prior distributions}
$$r \sim \text{Gamma}(\eta_r,s_r)\; \text{and} \; p \sim \Beta(u_p,v_p),$$
%$r\sim Gamma(\eta_r,s_r)$ and $p\sim Beta(u_p,v_p)$.
We refer to the resulting model as ESC-D.
{Under such model, $\bm{\mu}$ is allowed to be any distribution on $\{1,2,\dots\}$ a priori, and $\bm{\mu}^{(0)}$ represents our {a priori expectation about $\bm{\mu}$.}
It is important to note that $\alpha$ represents the degree of confidence we have in the fact that $\bm{\mu}$ is close to $\bm{\mu}^{(0)}$, and {changing 
$\alpha$ allows} one to enforce more or less smoothness in the the distribution of $\bm{\mu}$. Thus, as $\alpha$ increases, then $\bm{\mu}$ becomes more smooth; and in the 
limit $\alpha\to\infty$, the ESC-D model coincides with the ESC-NB one.}

When the ESC-D model is combined with a likelihood function $\Pr(\x|\Pi_n)$, {this combination leads to joint posterior distribution} over $\Pi_n$, $\bm{\mu}$, $r$ and $p$.
In order to sample from the posterior distribution of $\Pr(\Pi_n,\bm{\mu},r,p|\x)$ we use a scheme that iterates over the following steps:
\begin{enumerate}
\item Update $(r,p)|\Pi_n,\x$ with a MCMC kernel that is invariant with respect to the following conditional distribution 
\begin{align}
\Pr(r,p|\Pi_n,\x)
&\propto
r^{\eta_r-1}e^{-\frac{r}{ s_r}}
p^{ u_p-1}(1-p)^{ v_p-1}\prod_{s=1}^{M_n} %s!^{M_{s,n}}
\frac{\Gamma(M_{s,n}+\alpha\,\mu^{(0)}_s)}{\Gamma(\alpha\,\mu^{(0)}_s)}\,,
\label{eq:ESC_D_parameters}
\end{align}
where $K$ is the number of clusters in $\Pi_n$, $M_n$ the maximum cluster size and $M_{s,n}$ the number of clusters of size $s$.
\item Sample $(\mu_1,\dots,\mu_n,1-\sum_{i=1}^n\mu_i)|\Pi_n,r,p\sim$\\ 
\,\hbox{} \hfill$ \hbox{Dir}(\alpha\mu_1^{(0)}+ M_{1,n},\dots,\alpha\mu_n^{(0)}+M_{n,n},\alpha(1-\sum_{i=1}^n\mu_{i}^{(0)}))$.
%where $M_{s,n}$ is the number of clusters of size $s$ in $\Pi_n$
\item Update $\Pi_n|\textbf{x},\bm{\mu},r,p$ %_1,\dots,\mu_n
 using some MCMC kernel (e.g.\ Gibbs Sampling) using the following full conditionals for the allocation variables
 \begin{equation*}
\P(z_i=j|\z_{-i},\bm{\mu})
\propto
\Pr(\x|\z_{-i},z_i=j)\times\left\{
\begin{array}{ll}
(S_j+1)
\frac{\mu_{(S_j+1)}}{\mu_{S_j}}& \hbox{if }j=1,\dots,K_{-i},
\\
(K_{-i}+1)\mu_1 & \hbox{if }j=K_{-i}+1\,.
\end{array}
\right.
\end{equation*}
\end{enumerate}
{In steps 1-3, we utilize} a partially-collapsed Metropolis-within-Gibbs sampler \citep{van2015metropolis}.
The sampler collapses $\bm{\mu}$ when sampling $(r,p)$ in Step 1 to overcome the strong dependence between $\bm{\mu}$ and $(r,p)$.
This is possible because, thanks to the Dirichlet-Multinomial conjugacy, the marginal EPPF describing $\Pr(\Pi_n|r,p)$ is tractable and one can obtain the analytic expression in equation~\eqref{eq:ESC_D_parameters} by computing explicitly the integral in equation \eqref{eq:eppf_marginal}.
See section \ref{supp:samplers} of the supplement for the derivation of equation \eqref{eq:ESC_D_parameters}.
At the same time, the scheme imputes the vector $\bm{\mu}$ in Step 2 and explicitly conditions on it when sampling $\Pi_n$ in Step 3. This allows to exploit the higher degree of tractability and conditional independence of $\Pi_n|\bm{\mu},r,p$ compared to $\Pi_n|r,p$, which facilitates the MCMC update of $\Pi_n$ in Step 3. %, which is the most computationally intense part of the sampling process, by reducing the computing time per iteration. %, and potentially also improving mixing of the Gibbs Sampler in Step 3 because when $\bm{\mu}$ is integrated out there is a higher degree of dependence among allocation variables $(z_1,\dots,z_n)$.
Note that in the actual implementation it is not needed to generate the whole vector $(\mu_1,\dots,\mu_n,1-\sum_{i=1}^n\mu_i)$ but only the first $m$ components for some $m$ larger than $M_n$ and potentially impute more components if $M_n$ exceeds $m$ during Step 3.

\subsubsection{Generating samples from $ESC$ models}

In this section, we discuss how to generate samples from the random partition $\Pi_n\sim ESC_{[n]}(P_{\bm{\mu}})$.
%In some cases one might be interested in generating samples from the random partition $\Pi_n\sim ESC_{[n]}(P_{\bm{\mu}})$, for example to characterize non-asymptotic properties of the model.
A simple and natural approach is to use a rejection sampler that proceeds as follows:
\begin{enumerate}
\item Sample $\bm{\mu}\sim P_{\bm{\mu}}$. Then sample
 $S_1,\dots,S_K|\bm{\mu}\stackrel{iid}\sim\bm{\mu}$ until one obtains the first value of $K$ such that $\sum_{j=1}^KS_j\geq n$.
\item %If $\sum_{j=1}^KS_j= n$, define the cluster allocation variables $(z_1,\dots,z_n)$ as a uniformly at random permutation of the vector in \eqref{eq:indices_order_2}; otherwise return to Step 1 and repeat.
If $\sum_{j=1}^KS_j= n$, obtain the cluster allocation variables $(z_1,\dots,z_n)$ by uniformly permuting the allocations given in \eqref{eq:indices_order_2}; otherwise return to Step 1 and repeat.
\end{enumerate}
The proposed rejection sampler only requires the user to be able to generate samples from $\bm{\mu}\sim P_{\bm{\mu}}$ and $S_j\sim\bm{\mu}$, which is straightforward for the ESC-NB and ESC-D models.
Moreover, the average number of iterations of Steps 1 and 2 required by the algorithm for each accepted sample is $Pr(E_n)^{-1}$, where $E_n$ is the event defined in \eqref{eq:defi_E_n}.
Assuming $\mu_1>0$  and $\sum_{s=1}s\mu_s<\infty$ almost surely under $P_{\bm{\mu}}$,
Theorem \ref{thm:renewal_theorem} from the Supplement implies that
$Pr(E_n)$ converges to $\E_{\bm{\mu}\sim P_{\bm{\mu}}}\left[(\sum_{s=1}s\mu_s)^{-1}\right]$
 as $n\to\infty$, which is a constant strictly greater than $0$. 
% provided that $\mu_1>0$ with positive probability.
It follows that the average number of iterations required for each accepted sample, $Pr(E_n)^{-1}$, remains bounded as $n\to\infty$, making the rejection sampler computationally feasible when $n$ is large. 
In addition, section \ref{sec:IS} of the supplement describes a more sophisticated importance sampler that allows one to sample efficiently from $\Pi_n\sim ESC_{[n]}(P_{\bm{\mu}})$ when $\E_{\bm{\mu}\sim P_{\bm{\mu}}}\left[(\sum_{s=1}s\mu_s)^{-1}\right]$ is very small.

\subsection{Projectivity and prior work on microclustering}\label{sec:previous}

In this section, we review prior work in the literature on microclustering, focusing on its connections to projectivity. 
%Recall that finite exchangeability refers to random partitions $\Pi_n$ that are not exchangeable over $[n]$, while projectivity refers to sequences of random partitions $(\Pi_n)_{n=1}^\infty$ such that $\Pi_{n}$ is not equal in distribution to the restriction of $\Pi_{m}$ to $[n]$ for $1\leq n<m$.
As mentioned in section \ref{sec:microclust}, the microclustering property is incompatible with infinite exchangeability.
Thus, any non-trivial microclustering model must either sacrifice finite exchangeability or projectivity (see section \ref{sec:random_partitions}).

Similarly to \citet{miller15microclustering, zanella2016microclustering}, our proposed model sacrifices projectivity. 
Indeed finite exchangeability is a highly appropriate assumption in applications such as ER, where the order of entries in datasets is typically arbitrary and carries no information on the underlying linkage structure.
Sacrificing projectivity can create issues when trying to extrapolate inferences obtained from a given sample to the whole population. However, while being a non-trivial limitation, the latter is not directly relevant in ER contexts, where the typical inferential goal is to resolve entities in a given dataset.
%In any case, inferences at the sample level are valid with ESC models, and the latter is often the inferential goal in, e.g., ER applications.
%On the other hand, while projectivity is not relevant when making inferences about the latent partition structure for a dataset of fixed size, it allows to perform prediction and extrapolate inferences obtained from a given finite sample to the whole population in a natural way.
%Indeed using a non-projective model one must be careful % can be an issue , as the latter may be unjustified

In addition, because ESC models are built on infinitely exchangeable sequences of clusters, one can naturally perform prediction and extrapolate inferences to the population level in terms of future clusters -- rather than future observations -- using the posterior predictive distribution of a new cluster given the observed ones.
In doing so, one is implicitly assuming that observations arose as a random subset of clusters -- rather than a random subset of data points -- 
taken from a larger population.
The validity of this assumption depends on the application and, conditional on that, ESC models may or may not be appropriate to use for inference at the population level.

We now turn to reviewing related work on microclustering in the literature.
%\cite{DiBenedetto2017} have proposed recent insights to microclustering, providing a careful characterization of the theoretical properties in their work.
\cite{DiBenedetto2017} have developed a model for microclustering based on completely random measures, providing a careful characterization of its theoretical properties.
More specifically, the authors are motivated by data sets with a temporal component (e.g.\ arrival times), and thus, their proposed model preserves projectivity and abandons finite exchangeability. Thus, their model is designed for contexts where the order of data points is informative about the latent partition, which differs from our proposed approach. In addition, the marginal distribution of the random partition they propose is not fully tractable \citep[Sec.4.2.2]{DiBenedetto2017}, which creates computational challenges when performing posterior inference. In other work, \cite{klami2016probabilistic, jitta2018controlling} have proposed microclustering model in the context of finite mixture models. Their model shares some similarity with the ESC one in the sense of  directly modeling the distribution of cluster sizes, though they assume the number of clusters $K$ to be known and impose hard constraints on cluster sizes (e.g. by assuming clusters to be i.i.d.\ from a known distribution, rather than exchangeable). See also \citet{silverman2017bayesian} for a decision-theoretic approach to impose constraints on cluster sizes in the context of finite mixture models, which could potentially be extended to accommodate microclustering.

Non-projective random partition models have also been studied in \cite{zhou2017frequency}, where the authors focus on modeling random partitions that depend on both the sample size and the population size. Their models exhibit different asymptotic behaviors than ours (e.g.\ the number of clusters of a given size remains bounded as $n\to\infty$), and it is not clear if these models satisfy microclustering. Turning to projectivity and invariance under subsampling, this has been studied both in the random partitions literature \citep{kingman1978representation,gnedin2009characterizations} and, more extensively, in the random networks one \citep{shalizi2013consistency,orbanz2017subsampling}.
In particular, there have been proposals on how to relax exchangeability to obtain models for sparse networks \citep{crane2018edge,cai2016edge}.
These ideas have been applied to microclustering by \citet{Bloem-Reddy18}, focusing on random partition models with power-law distributions of cluster sizes.

%Non-projective random partition models like ours have also been studied in \cite{zhou2017frequency}, where the authors focus on modeling random partitions that depend on both the sample size and the population size.
%Their models exhibit different asymptotic behaviors than ours (e.g.\ the number of clusters of a given size remains bounded as $n\to\infty$) and it is not clear if they can satisfy microclustering. 
%The role of projectivity and invariance under subsampling has been studied both in the random partitions literature \citep{kingman1978representation,gnedin2009characterizations} and, more extensively, in the random networks one \citep{shalizi2013consistency,orbanz2017subsampling}.
%In particular, there have been proposals on how to relax exchangeability to obtain models for sparse networks \citep{crane2018edge,cai2016edge}
%The resulting models for sparse networks 
%that can also lead to random partition models satisfying the microclustering property. %, see e.g.\ \citet{Bloem-Reddy18}.
%For example, \citet{Bloem-Reddy18} propose a model for microclustering, whose significant differences with ours include the lack of finite exchangeability and, motivated by network applications, the focus on cluster size distributions with power-law behavior.

Finally, \cite{steorts2017performance} and \cite{dunson17} provide performance bounds for ER tasks using microclustering models. 
Given the number of features, categories within the features and the level of noise, \cite{steorts2017performance} provide quantitative statements on what is the largest number of entities one may hope to resolve assuming a general class of models. 
The authors achieve this by %assessing the first performance bounds on a large class of models for entity resolution and 
providing lower bounds on the minimum probability of misclassifying latent entities.
%This is important as it gives the first theoretical bound quantifying when such a latent entity is misclassified, simulations studies are given that offer guidance for when the bounds are tight and loose in practice that are complementary to a recent paper of \cite{dunson17}. 
\cite{dunson17} show that unless the number of features grows with the number of data points (records), entity resolution is infeasible in certain contexts.
This agrees with empirical evidence from the entity resolution literature and is related to the separation between entities going to zero as the number of entities increases.
Given the work of \cite{dunson17},  % it seems crucial that ``the effect on dimensionality" should be expanded to give insight from both a theoretical and empirical point of view. Moreover,
 it seems that studying more closely the regime where the number of features grows with the number of data points may provide insight on how much information one needs to collect for a given entity resolution problem. 
For example, the authors suggest that a logarithmic growth in the number of features may be sufficient to achieve accurate entity resolution, and if substantiated, this would be a rather encouraging scenario. 
Both aforementioned papers have similar goals and it would be interesting to combine and expand their results, seeing what guidance such theory could provide to a user.
%what could be learned from looking at wider classes of models, trying to combine the complementary theoretical results of both papers, and what guidance such theory could provide to a user.}

%Finally, moving to ER aspects, \cite{dunson17} provide performance bounds for ER with microclustering models and a gaussian likelihood, showing that the proportion or records assigned to the correct entity goes to 0 as the number of records $n$ increases.
%The underlying intuition is that ER performance fundamentally depends on the separation between entities, and the latter vanishes as $n\to \infty$ unless the number of observed features -- or categories within features -- grows with the number of datapoints.
%Thus, in order to perform reliable entity resolution, one must collect enough features to guarantee non-negligible separation between entities. The authors' results suggest that a number of features growing logarithmically with $n$ may be sufficient %to preserve entity sepration and thus avoid performance degradation.  
%\cite[Sec.2.4]{dunson17}.

\section{Entity Resolution Model for Categorical Data} \label{sec:RL}

In this section, we describe the data generation process in the context of ER using a fully Bayesian generative process \citep{steorts16jasa, liseo_2011}. 
%We stress that our proposed methodology goes beyond ER and it is suitable for general microclustering tasks.
In many ER tasks, unique identifiers (e.g. social security numbers) and other personal identifying information (e.g. full name, address) are not easily accessible due to privacy concerns \citep{christen_2012, winkler_2006}. In such cases, we need to rely on information of commonly available categorical fields (e.g. gender, date of birth), to cluster records that belong to the same entity.  
%{For such reasons, we assume the data is categorical.}

{Before providing the model, we first define notation and assumptions used throughout the remainder of the paper. Assume} the observed data $\x = (x_1,\dots,x_n)$ consists of $n$ records, and each record $x_i$ contains $L$ fields $(x_{i\ell})_{\ell=1}^L$. Assume that the fields correspond to categorical features where $D_\ell$ denotes the number of categories in the $\ell$-th field. {Following the framework proposed by \cite{steorts16jasa},} we assume that {fields within a cluster are conditionally independent. In addition, there are the following field specific parameters:}
%we assume that fields within clusters are independent and each of them depends only on the following field specific parameters: 
\begin{itemize}
\item a probability $\beta_\ell\in(0,1),$ {reflecting the distortion at the field level,}
%that reflects the proneness to errors of each field, 
and
\item a density vector $\btheta_{\ell}=(\theta_{\ell d})_{d=1}^{D_\ell}\in[0,1]^{D_\ell}$, {characterizing} the distribution of categories {within each} field, where $\sum_{d=1}^{D_\ell}\theta_{\ell d}=1$. 
\end{itemize}
Let $y_{j\ell}$ represent the {latent entity}
%$\ell$-th feature of the entity 
associated to cluster $C_j \in \Pi_n$ for $j=1,\ldots, K$, and  $\zeta(\Pi_{n}, i)$ {represent} a function that maps record $i$ to its latent cluster assignment $z_i$ according to 
$\Pi_{n}$. {Then assuming a spike-and-slab distribution with mixture weight $\beta_\ell$ for the likelihood function in equation \eqref{eq:x_giv_y}, we write the microclustering model for ER as}
\begin{align}
x_{i\ell} | y_{j\ell}, \zed, \btheta_{\ell}, \beta_{\ell}  &\stackrel{ind}\sim \beta_{\ell}\btheta_{\ell} + (1-\beta_{\ell})\delta_{y_{z_i\ell}} \label{eq:x_giv_y}\\
y_{j\ell} & \stackrel{ind}{\sim} \btheta_{\ell}  \label{eq:y_gen} \\
\zed & = \zeta(\Pi_{n}, i)  \\
\Pi_n & \sim ESC_{[n]}(\bm{\mu}),
\end{align}
where $\delta_{y}$ is the Dirac-delta function at $y$, and $\btheta_{\ell}$ is fixed and assumed to be the empirical distribution of the data.  By integrating out $y_{j\ell}$ from equations \eqref{eq:x_giv_y} and \eqref{eq:y_gen}, it follows that the likelihood function is
\begin{equation}\label{eq:likelihood_function}
\Pr(\x|\Pi_n,\bbeta,\btheta)
=
\left(
\prod_{\ell=1}^L
\prod_{i=1}^n
\beta_\ell
\theta_{\ell x_{i\ell}}
\right)
\prod_{j=1}^K
\prod_{\ell=1}^L
f(\x_{j\ell},\beta_\ell,\btheta_\ell)
\,,
\end{equation}
where $\x_{j\ell}=\{x_{i\ell}\,:\,i\in C_j\}$. Moreover, 
\begin{align}\label{eq:f}
f(\x_{j\ell},\beta_\ell,\btheta_\ell)
&=
1-\sum_{i=1}^{m^{(j)}}\theta_{\ell x^{(j)}_{i\ell}}
+\sum_{i=1}^{m^{(j)}}\theta_{\ell x^{(j)}_{i\ell}}
\left(
\frac{\beta_\ell\theta_{\ell x^{(j)}_{i\ell}}+(1-\beta_\ell)}{\beta_\ell\theta_{\ell x^{(j)}_{i\ell}}}
\right)^{q^{(j)}_{i\ell}}
\,.
\end{align}
Note that $x^{(j)}_{1\ell},\dots,x^{(j)}_{m^{(j)}\ell}$ represents the collection of unique values in $\x_{j\ell}$, and $q^{(j)}_{1\ell},\dots,q^{(j)}_{m^{(j)}\ell}$  represents the corresponding frequencies such that for each $i\in\{1,\dots,m^{(j)}\}$ the value $x^{(j)}_{i\ell}$ appears exactly $q^{(j)}_{i\ell}$ times in $\x_{j\ell}$ (see additional details in section \ref{supp:likelihood} of the supplement). For choices of the fixed parameters of the model, we refer to \cite{steorts15entity} for practical guidance and extensive simulation studies.

%In practice, it is recommended to favor small values of the distortions probabilities $(\beta_\ell)$ to obtain sensible ER results. \cite{steorts15entity} suggests to use fairly concentrated priors   that lead to very similar results to those obtained by fixing $\beta_\ell$ to a small value. However, the final linkage is very sensitive to this choice and the model fit can be very poor when $\beta_\ell$ is not tuned correctly. Here, we explore the performance of the ESC models assigning a truncated Beta prior distribution for $\beta_\ell$ which is a more flexible choice \citep{sadinle_2014}. We illustrate and discuss the effect of this approach in the accuracy of ER in section \ref{sec:experiments}.

\section{Simulation Study and Real Data Applications}\label{sec:sim_both}

In this section, we explore our proposed methods on both simulated and real data sets for entity resolution tasks. Specifically, we explore the behavior of the proposed ESC-NB and ESC-D partition priors, comparing their performance to two traditional partition models based on Dirichlet Process (DP) and Pitman-Yor (PY) process mixtures \citep{sethuraman94constructive, ishwaran03generalized}. The aim of the simulation study is to explore the impact of the choice of prior model for the random partition on ER performance for varying levels of signal-to-noise ratio and varying distributions of cluster sizes for the true data-generating partition. The aim of the real data experiments is to evaluate how well the proposed models recover the true partition in more realistic scenarios where noise levels are unknown and the likelihood is potentially misspecified.

\subsection{Simulation Study}\label{sec:sim_study}

%In this section we provide a simulation study comparing the DP, PYP, ESC-NB and ESC-D models. 
In this section, we simulated data according to the ER model described in section \ref{sec:RL} with $L=5$ fields, $D_\ell=10$ categories per field, a uniform distribution $\btheta_\ell$ on $\{1,\dots,10\}$, and a distortion parameter $\beta$ varying in $\{0.01,0.05,0.1\}$.
% and the true linkage structures.
%The consider different true linkage structures with different distribution of cluster sizes. 
%All simulated datasets have $N=1000$ datapoints and vary in terms distortion parameter $\beta$ and distribution of cluster sizes for the true linkage structures. 
For each value of the distortion parameter, we consider five different %scenarios, where the 
data-generating partitions with different combinations of cluster sizes.
All distributions of cluster sizes corresponds to $K=200$ clusters (e.g.\ the first partition contains 50 clusters of size 1, 2, 3 and 4, respectively) and vary in terms of average size of clusters and shape of the resulting distribution, as shown in Figure \ref{fig:scenarios}. 
%It is interesting to consider different distribution of cluster sizes because the four models under comparison can be interpreted as different ways of inducing prior distributions on cluster sizes
%differ exactly in the way the cluster sizes are modeled

We assume the parameters $\btheta_\ell$ and $\beta$ to be known, keeping them fixed at their data-generating values, and focus on the posterior distribution of the random partition, $\Pr(\Pi_n|\x)$.
We also experimented with estimating $\btheta_\ell$ with the data empirical distribution and assigning a prior to $\beta$.
In most cases, the resulting posterior distribution of $\beta$ was concentrated around the data-generating value and the overall results were qualitatively similar to the ones reported below.
For each combination of the distortion level and data-generating partition, we compare the four models in terms of the posterior False Negative Rates (FNR) and False Positive Rates (FPR) that they achieve, which are estimated using the MCMC algorithms described in sections \ref{sec:ESCNB}-\ref{sec:ESCD} (see section \ref{supp:convergence} of the supplement for implementation details). 

\begin{figure}[t!]
\includegraphics[width=\textwidth]{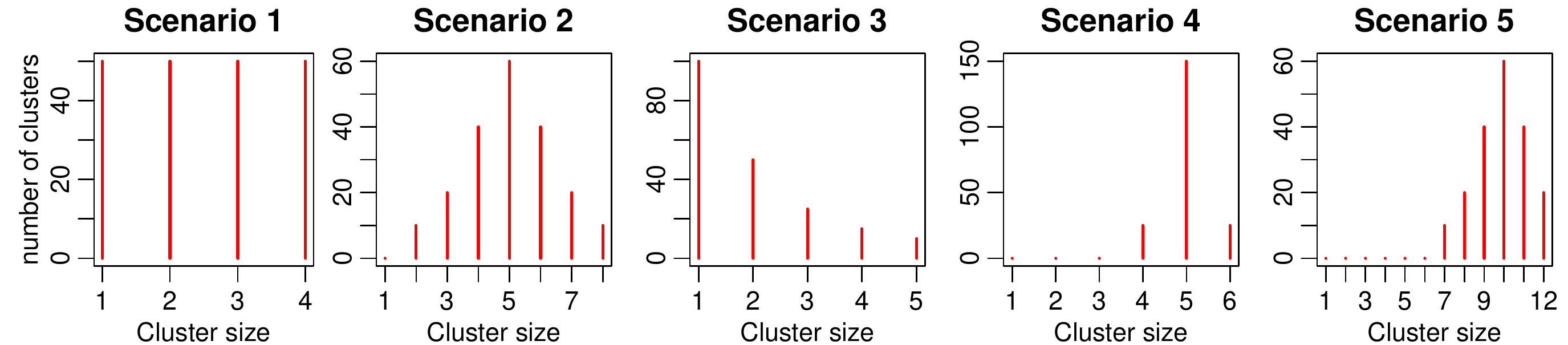}
\caption{Number of clusters of each size for the five data-generating partitions considered in the simulation study of section \ref{sec:sim_study}.} \label{fig:scenarios}
\end{figure}

The results, reported in Table \ref{table:rates_simulations}, are insightful in many ways. First, observe that the ESC-NB and ESC-D outperform the DP and PYP models  in most situations, especially in terms of significantly reducing the FNR while maintaining comparable FDR. The extent of the improvements depends on both distortion level and distribution of cluster sizes. Next, Figure \ref{fig:boxplots_simulations} reports the posterior distribution of the number of clusters of each size (black box-plots) versus the number of clusters of each size in the true data-generating partition (red dots) for $\beta=0.05$. In particular, Figure \ref{fig:boxplots_simulations} suggests that DP ad PYP are flexible enough to accommodate the smooth ``geometric-like" distribution of cluster sizes of Scenario 3, where they achieve FNR and FDR comparable to ESC-NB and ESC-D. On the other hand, the DP and PYP models struggle to accommodate other distributions of cluster sizes such as the ones in Scenarios 1, 2, 4 and 5. Figures analogous to Figure \ref{fig:boxplots_simulations} with $\beta=0.01$ and $\beta=0.10$ can be found in the supplement.

\begin{table}[h!]
{\renewcommand{\arraystretch}{1.2}
\begin{footnotesize}
\centering
\begin{tabular}{c|c|cc|cc|cc|cc|cc}
 & & \multicolumn{2}{c|}{Scenario 1} & \multicolumn{2}{c|}{Scenario 2}  & \multicolumn{2}{c|}{Scenario 3}  & \multicolumn{2}{c|}{Scenario 4}  & \multicolumn{2}{c}{Scenario 5}  \\ 
  \hline
 & Model & FNR & FDR & FNR & FDR & FNR & FDR & FNR & FDR & FNR & FDR \\ 
  \hline
\parbox[t]{2mm}{\multirow{3}{*}{\rotatebox[origin=c]{90}{$\beta=0.01$}}} 
& DP & 6.2 & 1.1 & 2.1 &  0.2 & 6.4 &  1.3 & 2.8 & 0.7 & 1.0 &  0.3 \\ 
& PY & 6.1 & 1.1 & 2.2 &  0.2 & 6.5 &  1.3 & 2.8 & 0.7 & 1.0 &   0.3 \\ 
& ESCNB & 4.3 & 1.3 & 0.8& 0.4 & 6.4 & 1.4 & 1.1 & 0.7 &  0.3 &  0.3 \\ 
& ESCD & 2.9 & 1.2 &  0.7 & 0.4 &  5.7 & 1.6 &  0.3 &  0.3 &  0.3 &  0.3 \\ 
  \hline
    \hline
\parbox[t]{2mm}{\multirow{3}{*}{\rotatebox[origin=c]{90}{$\beta=0.05$}}} 
& DP & 11.7 & 6.4 & 9.2 &  4.1 & 14.6 &  6.9 & 8.7 & 4.3 & 6.7&  3.3\\ 
& PY & 11.9 & 6.4 & 9.3 &  4.1 & 14.5 & 7.0 & 8.7 & 4.3 & 6.7 &  3.3 \\ 
& ESCNB & 9.0 & 6.4 & 6.4 & 5.1 & 13.7 & 7.4 & 5.9 & 5.0 &  4.3 & 4.2 \\ 
& ESCD &  8.0 &  4.4 &  5.9 & 5.6 & 12.8 & 7.7&  3.0 &  2.3 & 4.7 & 3.7 \\ 
  \hline
    \hline
\parbox[t]{2mm}{\multirow{3}{*}{\rotatebox[origin=c]{90}{$\beta=0.1$}}} 
& DP & 27.2 & 16.3 & 17.4 & 11.3 &  33.9 & 16.2 & 22.9 & 12.2 & 18 &  11.1 \\ 
& PY & 27.5 & 16.3 & 17.4 &  11.2 & 34.1 & 16.1 & 23.0 & 12.2 & 17.9 &  11.1 \\ 
& ESCNB & 24.3 & 16.3 & 13.3 & 12.8 &  33.9 & 15.9 & 16.9 & 13.9 &  13.8 & 13.7 \\ 
& ESCD &  21.7 &  14.0 &  12.5 & 12.9 & 34.7 &  15.4 &  11.3 & 8.9 & 14.3 & 11.4 \\ 
\end{tabular}
\caption{Posterior FNR and FDR (in percentages) for the simulation study of section \ref{sec:sim_study}, with different noise levels ($\beta\in\{0.01,0.05,0.1\}$), distributions of cluster sizes (Scenarios 1-5) and models for the prior distribution over partitions (DP/PYP/ESCNB/ESCD).}\label{table:rates_simulations}
\end{footnotesize}
}
\end{table}

\begin{figure}[h!]
\includegraphics[width=\textwidth]{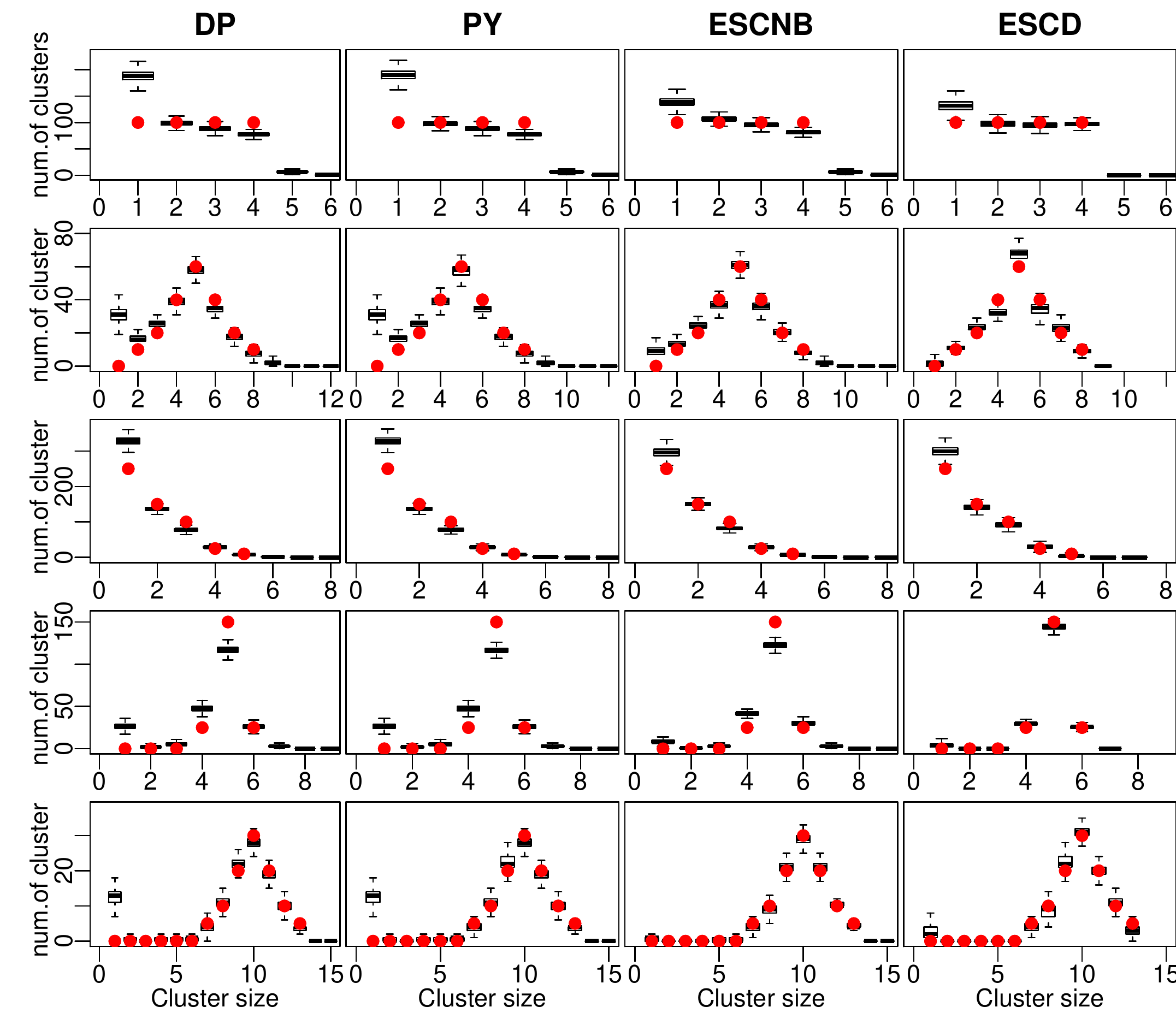}
\caption{Posterior distribution of the number of clusters of each size (black boxplots based on 20k MCMC samples from the posterior after thinning) versus number of clusters of each size in the true data-generating partition (red dots) for $\beta=0.05$. Each column corresponds to a different prior for the partition, and each row to a different data generating partition.} \label{fig:boxplots_simulations}
\end{figure}

As one might expect, Table \ref{table:rates_simulations} illustrates that the FNR and FDR increase with the distortion parameter $\beta$.
For high levels of distortion, such as $\beta=0.1$, all methods struggle to accurately recover the data-generating partition, regardless of the prior specification.
In these context using the ESC-NB or ESC-D prior still helps in reducing the FDR and FNR compared to DP or PYP, although the improvement is less significant. Table \ref{table:rates_simulations} also suggests that partitions with clusters of smaller sizes lead to higher FNR and FDR (e.g.\ scenarios 1 and 3), while those with larger clusters are easier to recover (e.g.\ scenario 5).
This observation supports the idea that microclustering is a hard inferential problem, especially in our motivating applications where cluster sizes are very small. %, which is the main context of interest in this paper.
%This in accordance with the theoretical results of \cite{dunson17} and with the intuition that, for fixed number of categories and fields within category, the higher the number of clusters the less separation there will be between them.
%As discussed in \cite{dunson17} one would need the number of fields and/or categories per field to grow with the number of clusters to avoid an increase in FNR and FDR with the number of clusters.

%See Section \ref{sec:settings} for details on hyperparameter settings and posterior computation.
%FNR and FDR are estimated by averaging over 20,000 posterior samples, obtained by thinning an MCMC run of $2\times 10^7$ iterations every $1,000$ iterations and discarding the first tenth as a burn-in. 
%Standard convergence diagnostics and plotting of traceplots did not highlight mixing issues.

\subsection{Hyperparameter settings}\label{sec:settings}
Inference for all the experiments in sections \ref{sec:sim_study} and \ref{sec:experiments} are performed using similar parameter values.
We set $\alpha = 1$ and $\bm{\mu}^{(0)}=\text{NegBin}(r,p)$ for the ESC-D model and choose $\eta_r = s_r = 1$ and $u_p=v_p=2$ for the prior hyperparameters of $r$ and $p$ for both proposed models. We use $p \sim \text{Beta}\left(2, 2\right)$ because a uniform prior implies an unrealistic prior belief that $\mathbb{E}[S_j] = \infty$. For the real data applications, we assume a Beta prior distribution for the distortion probabilities, $\beta_\ell$, with mean 0.005 and standard deviation of 0.01. Finally, to ensure a fair comparison between the two different classes of models, we assign $\text{Gamma}(1, 2/n)$ priors for the concentration parameters of DP and PYP to reflect a vague prior belief that $\mathbb{E}[K] = \frac{n}{2}$. 

\subsection{Real Data Applications}\label{sec:experiments}

In this section, we provide two real data applications that are based upon two panel studies -- the Social Diagnosis Survey (SDS) of quality of life in Poland, and the Survey of Income and Program Participation (SIPP) in the United States. In sections \ref{sec:SDS} and \ref{sec:SIPP} we describe the data sets and present inferential results.
% based on 60,000 posterior samples obtained by thinning four MCMC runs of $2\times 10^7$ iterations every $1,000$ iterations, and discarding the first 5,000 iterations of each run as a burn-in. Standard convergence diagnostics and trace plots did not highlight substantial mixing issues. See Section \ref{supp:convergence} of the supplement for convergence trace plots and estimates of Monte Carlo errors. 

\subsubsection{The Social Diagnosis Survey (SDS)}
\label{sec:SDS}
The SDS is a project that supports diagnosis work derived from institutional indicators of quality of life in households in Poland (anyone older than sixteen years of age). The SDS is based upon panel research, where the first sample was taken in the year 2000 and the same households were revisited approximately every two years thereafter (\url{http://www.diagnoza.com/index-en.html}). 
The SDS database contains 41,227 unique records of individual members of households that participated in the survey in at least one of the years 2011, 2013, and 2015. Thus, individuals are duplicated longitudinally across these three years (waves) but no duplication occurs within a specific year. 
The data is available in horizontal format for the different waves of the survey such that we can uniquely identify duplicated information of the same individual.

\begin{figure}[h!]
\includegraphics[width=\textwidth]{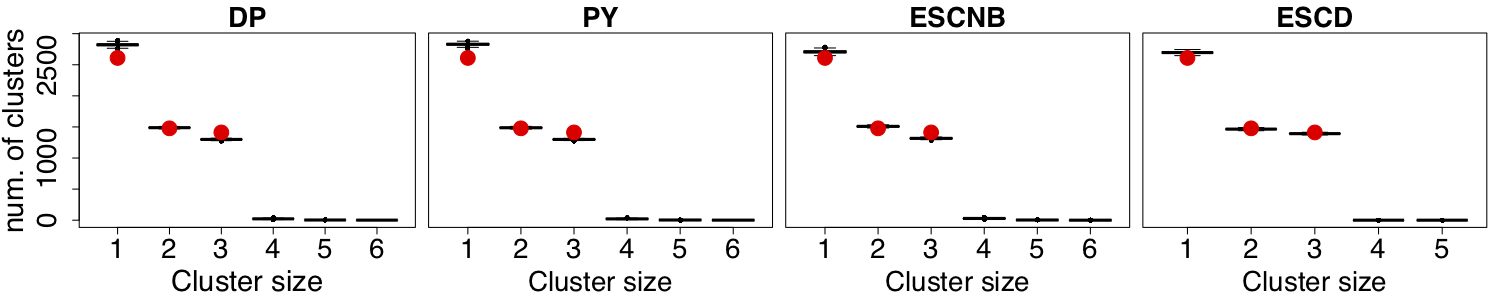}
\caption{\textbf{SDS5500 dataset.} Posterior distribution of the number of clusters of each size (boxplots) versus the number of clusters of each size in the true data partition (red dots) for DP, PY, ESC-NB and ESC-D models for SDS data set of $K=5,500$ unique records.} \label{fig:SDSsizes}
\end{figure}

%true partition of roughly 47.5\% singleton clusters, 26.8\% clusters of size two and 25.7\% clusters of size three. 
In order to illustrate Bayesian ER, we construct a subset of $K=5,500$ unique individuals and six fields of information: sex, date of birth (day, month and year), province of residence, and education level -- denoted as SDS5500. Here we consider a subsampled dataset for computational reasons, as performing fully Bayesian ER on the whole SDS database would be computationally challenging (see section \ref{sec:discussion} for discussion). The resulting data set consists of 9,802 records in vertical format with a maximum cluster size of three. 
Note that our aim is to accurately recover the underlying linkage structure for the SDS5500 data set, and that the parameter inferences based on the SDS5500 data set are not meant to reflect the ones based on the entire SDS data set.

Initial exploration of the data showed relatively low levels of noise for most fields within duplicate records making ER for this data straightforward to some extend. Figure \ref{fig:SDSsizes} displays the true number of clusters of each size and the respective posterior distributions. In this case, all the models are able to capture the true distribution of cluster sizes adequately as we expected for such a regular distribution (see also Scenario 3 in section \ref{sec:sim_study}).

\begin{table}[h!]
\centering
\caption{\textbf{SDS5500 dataset.} Posterior mean and standard error of the number of clusters (K), posterior FNR and FDR (in percentages), and posterior estimates of the distortion probabilities of sex, date of birth (year, day, month), province and education level for DP, PY, ESC-NB and ESC-D models for SDS data set of $K=5,500$ unique records.} \label{tab:SDS}
{\renewcommand{\arraystretch}{1.2}
\begin{footnotesize}
\begin{tabular}{c|cccccccccc}
\hline
Model & $\EX [K]$ & SE & FNR & FDR & $\EX [\beta_{1}]$ & $\EX [\beta_{2}]$ & $\EX [\beta_{3}]$ & $\EX [\beta_{4}]$ & $\EX [\beta_{5}]$ & $\EX [\beta_{6}]$\\
\hline 
DP  & 5634 & 15.2 & 5.6 & 3.0 & 
0.0001 & 0.0003  & 0.0071 & 0.0023 & 0.0012 & 0.0937 \\
PY &  5640 & 13.9 & 5.6 & \bftab 2.8 &
0.0009 & 0.0003  & 0.0069 & 0.0019 & 0.0012 & 0.0936\\
ESC-NB  & 5565 & 16.0 & 4.9 & 3.9 & 
0.0012 & 0.0003  & 0.0122 & 0.0039 & 0.0024 & 0.0946\\
ESC-D & \bftab 5553 &  12.0 &  \bftab 4.3 &  3.0 &
0.0011 & 0.0003  & 0.0125 & 0.0044 & 0.0028 & 0.0951\\
\hline
\end{tabular}
\end{footnotesize}
}
\end{table}

As a benchmark for ER, we compare our proposed methodology to the Fellegi-Sunter (FS) method available in the \pkg{RecordLinkage} package in \proglang{R} \citep{FellegiSunter69,SariyarBorg10}.
The FS method is one of the most widely used ER methods in the literature due to its simplicity and speed \citep{FellegiSunter69}. Two records pairs are declared to be matches if a corresponding  likelihood ratio exceeds a threshold.  This particular implementation requires labeled data such that the optimal threshold can be found using an EM algorithm. Thus, by comparing our proposed unsupervised methodology to a semi-supervised one, we are being conservative and favorable towards the FS method.

%when the FS method is compared to our proposed methodology or any fully unsupervised procedure, it is \emph{extremely conservative} by nature given that the FS method is semi-supervised, and receives all the training data as implemented in the \pkg{RecordLinkage} package in \proglang{R} \citep{FellegiSunter69,SariyarBorg10}.}
%The package provides a point estimate of the linkage given a threshold value that is commonly estimated through clerical review when ground truth is not available. Here, we employ a supervised procedure that utilizes the true partition of the data to find an optimal threshold that minimizes the absolute number of misclassified record pairs. This procedure works in advantage of the Fellegi-Sunter approach compared to our fully unsupervised methodology. 
The resulting FNR and FDR values of the FS method are 4.2\% and 4.4\% with an estimated number of unique entities of $\hat{K}=5,543$.
In agreement with that, Table \ref{tab:SDS} shows that all four models perform relatively well for ER. Overall, the ESC-D model has the best performance with the closest estimate for the true number of unique entities, the lowest FNR (4.3\%) and comparable FDR (3\%) to the DP and PY models. Finally, we observe that the posterior mean for the distortion probability of the education level ($\beta_6$) is considerably higher compared to the other fields in the data. This is consistent with the level of education possibly changing over time between waves of the survey.

\subsubsection{The Survey of Income and Program Participation (SIPP)}
\label{sec:SIPP}

The SIPP is a longitudinal survey that collects information about the income and participation in federal, state, and local programs of individuals and households in the United States \cite{SIPP}.
The SIPP is administered in panels where sample members within each panel are divided into four rotation groups (i.e. subsamples of roughly equal size). One rotation group is interviewed each month such that a wave of the survey consists of a 4-month cycle of interviews.
% interviewed each month, such that a wave of the survey consists of a 4-month cycle of interviews.
% referred to as a rotation groups. One rotation group is interviewed each month -- called reference month -- such that the entire sample is interviewed at 4-month intervals. 
The data is publicly available through the Inter-university Consortium for Political and Social Research (ICPSR) (\url{https://www.icpsr.umich.edu}). Here, we focus on data from 89,794 unique individuals obtained from the last month of interviews for five waves of the survey performed during 2005 and 2006. Individuals in this subset are only duplicated across waves (not within) and unique identifiers are available. 
%The most recent design of this survey was established in the year 1996. 
%and not all individuals are present in the five waves due to the rotation group design. 

\begin{figure}[t!]
\includegraphics[width=\textwidth]{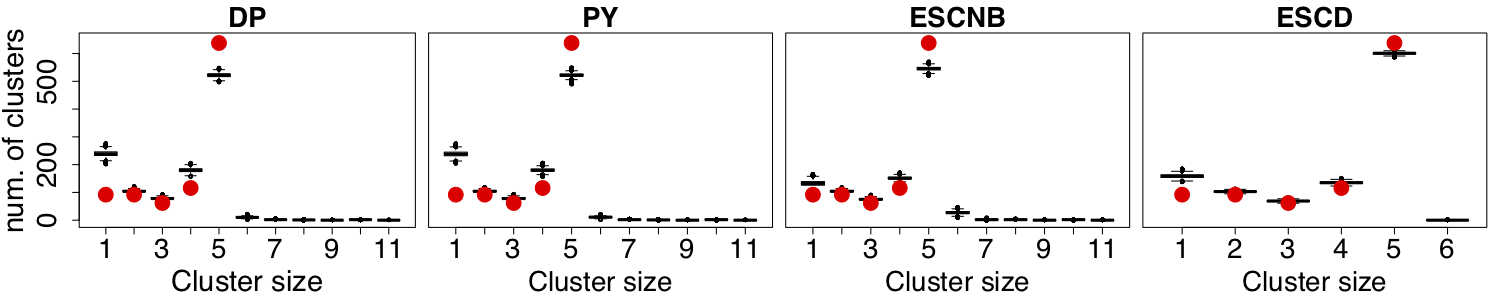}
\caption{\textbf{SIPP1000 dataset.} Posterior distribution of the number of clusters of each size (boxplots) versus the number of clusters of each size in the true data partition (red dots) for DP, PY, ESC-NB and ESC-D models for SIPP data set of $K=1,000$ unique records.} \label{fig:SIPPsizes}
\end{figure}

% with a true partition of roughly 63.8\% clusters of size five, 11.6\% clusters of size four,  6.2\% of size three, and 9.2\% for both clusters of sizes one and two. 
In this case, we construct a data set of $K=1,000$ unique individuals, which we denote by SIPP1000. SIPP1000 contains a total of 4,116 records and has five fields of information --- sex, year and month of birth, race, and state of residence. The FS method results in a FNR of 10.5\% and FDR of 4.7\% with an estimated number of unique entities of $\hat{K}=1195$. Figure \ref{fig:SIPPsizes} displays the true cluster size distribution of the data which contains 63.8\% clusters of size five and a roughly homogeneous number of clusters of sizes 1 to 4. We observe that the ESC-D model recovers the true distribution of cluster sizes more closely while the DP and PY show inferior performance overall. In particular, the ESC-D model estimates a maximum cluster size of six while the other models identify clusters of sizes up to eleven. Table \ref{tab:SIPP} shows that the ESC-D model displays the lowest error rates (4.8\% and 1.8\%) which represent an error reduction of over 50\% compared to the FS method, and at least 30\% reduction compared to the DP and PY models. Finally, we observe that the month of birth is the field with higher estimated distortion ($\beta_3$) for all models.

\begin{table}[t!]
\centering
\caption{\textbf{SIPP1000 dataset.} Posterior mean and standard error of the number of clusters (K), posterior FNR and FDR (in percentages), and posterior estimates of the distortion probabilities of sex, year and month of birth, race and state for DP, PY, ESC-NB and ESC-D models for SIPP data set of $K=1,000$ unique records.} \label{tab:SIPP}
{\renewcommand{\arraystretch}{1.2}
\begin{footnotesize}
\begin{tabular}{c|ccccccccc}
\hline
Model & $\EX [K]$ & SE & FNR & FDR & $\EX [\beta_{1}]$ & $\EX [\beta_{2}]$ & $\EX [\beta_{3}]$ & $\EX [\beta_{4}]$ & $\EX [\beta_{5}]$ \\
\hline 
DP  & 1139 & 9.4 & 8.0 & 2.6 & 
0.0008 & 0.0004 &  0.0243  & 0.0011 & 0.0005 \\
PY &  1139 & 8.9 & 8.0 & 2.7 &
0.0008 & 0.0004 & 0.0244 & 0.0011 & 0.0005 \\
ESC-NB  & \bftab 1043 & 8.8 & 5.2 & 4.4 & 
0.0009 & 0.0037 &  0.0399  & 0.0020 & 0.0035\\
ESC-D & 1067 &  5.7 &  \bftab 4.8 & \bftab 1.8 &
 0.0006 & 0.0010  & 0.0386 & 0.0028 & 0.0015\\
\hline
\end{tabular}
\end{footnotesize}
}
\end{table}

\section{Discussion} \label{sec:discussion}
%{In this paper, we have provided four contributions to the literature. }
In this paper, 
we have proposed a general class of random partition models that satisfy the microclustering property with well-characterized theoretical properties. Our models overcome major limitations in the existing literature on microclustering models.
%---  lack of interpretability, identifiability, and full characterization of model asymptotic properties. 
In our proposed approach, we drop the classical assumption of having an exchangeable sequence of data points, and instead assume an exchangeable sequence of clusters. 
Our proposed framework offers flexibility in terms of the prior distribution of cluster sizes, computational tractability, and applicability to a large number of microclustering tasks (network analysis, genetics, and ER). 
We have established theoretical properties of the resulting class of priors, where we characterize the asymptotic behavior of the number of clusters and of the proportion of clusters of a given size. One appealing feature of our proposed framework is being able to explicitly control the prior distribution of the cluster sizes.
Our framework allows a simple and efficient Markov chain Monte Carlo algorithm to perform statistical inference, where we utilize a partially collapsed Gibbs sampler. 
Finally, the simulated and real experiments showed encouraging results for the performance of our proposed models on the microclustering task of ER.

Our work serves as a first basis for formalizing microclustering in an interpretable and identifiable way with a full characterization of asymptotic properties. We hope that our approach will encourage the emergence of other formalizations of microclustering models in other applications areas. Within ER itself, the most natural extension would be looking at more flexible models for structure data such as textual data. 
In terms of computation, while taking advantage of the Chaperones algorithm allows for computational gains, further exploration of efficient algorithms to improve scalability is needed.
%(for example by using a non-uniform proposal probability in the Chaperones algorithm, see \citealp[Appendix B]{miller15microclustering})
One potential route for our proposed framework would be combining the chaperones algorithm with the locally-balanced schemes proposed by \citet{zanella2019informed}, or exploring crowdsourcing and importance sampling approaches that have been recently proposed by \citep{marchant2017search}. Finally, it would be of interest to develop tighter and more general performance bounds (i.e.\ lower bounds on the misclassification error) as in \cite{steorts2017performance}, which would be appealing for the ER, information theory, and machine learning communities.  
\bibliographystyle{agsm}
\bibliography{references}%,chomp,biblio

\appendix

\newpage

%\def\spacingset#1{\renewcommand{\baselinestretch}%
%{#1}\small\normalsize} \spacingset{1}

%%%%%%%%%%%%%%%%%%%%%%%%%%%%%%%%%%%%%%%%%%%%%%%%%%%%%%%%%%%%%%%%%%%%%%%%%%%%%%

%\spacingset{1.5} % DON'T change the spacing!

\section{Proofs}
\label{app:proof}
%% PROP 1 and 2
%\subsection{Proof of Proposition \ref{prop:marginal_EPPF} and \ref{prop:conditional_eppf}}
\subsection{Proof of Proposition \ref{prop:marginal_EPPF}}
\label{app:prop1}
\begin{proof}[Proof of Proposition \ref{prop:marginal_EPPF}]
We seek to compute the probability mass function (pmf) of the random partition $\Pi_n=\{C_1,\dots,C_K\}$ obtained from Model \ref{model:esc}. We denote this pmf by $\P(\Pi_n|E_n)$ to make explicit the conditioning on $E_n$ in Step 1 of Model \ref{model:esc}.
Thus,
\begin{equation*}
\P(\Pi_n|E_n)
%=
%\int
%\P(\Pi_n,\mu|E_n)
%d\bm{\mu}
=
\int
\P(\Pi_n|\bm{\mu},E_n)\P(d\bm{\mu}|E_n)\,.
\end{equation*}
By Bayes' theorem, we find that 
$$
\P(d\bm{\mu}|E_n)%=\frac{\P(\mu)\P(E_n|\mu)}{\P(E_n)}
=\frac{P_{\bm{\mu}}(d\bm{\mu})\P(E_n|\bm{\mu})}{\P(E_n)}\,,
$$
where given the construction in Step 1 of Model \ref{model:esc}, we observe that 
$$\P(E_n|\bm{\mu})=\sum_{k=1}^n\sum_{(s_1,\dots,s_k)\in\{1,n\}^k}\mathbb{I}\left(\sum_{j=1}^ks_j=n\right)\prod_{j=1}^k\mu_{s_j}$$
 and 
$\P(E_n)=\int\P(E_n|\bm{\mu})
%\sum_{k=1}^n\sum_{s_1,\dots,s_k=1}^n\mathbb{I}(\sum_{j=1}^ks_j=n)\prod_{j=1}^k\mu_{s_j}
P_{\bm{\mu}}(d\bm{\mu})$.
Now, consider $\P(\Pi_n|\bm{\mu},E_n)$.
Summing over all possible cluster assignments $\z=(z_1,\dots,z_n)$, we find that
\begin{equation*}
\P(\Pi_n|\bm{\mu},E_n)
=
\sum_{z_1,\dots,z_n=1}^K
\P(\Pi_n|\z,\bm{\mu},E_n)\P(\z|\bm{\mu},E_n).
\end{equation*}
The term $\P(\Pi_n|\z,\bm{\mu},E_n)$ equals 1 for all $K!$ cluster assignments $\z$, leading to the partition $\Pi_n$ and 0 otherwise.
The term $\P(\z|\bm{\mu},E_n)$ equals
\begin{align*}
\P(\z|\bm{\mu},E_n)=&\P(\z|S_1,\dots,S_K)\P(S_1,\dots,S_K|\bm{\mu},E_n)\\
=&\frac{\prod_{j=1}^KS_j!}{n!}\frac{\prod_{j=1}^K \mu_{S_j}}{\P(E_n|\bm{\mu})}\,,
\end{align*}
where $S_j=\sum_{i=1}^n\mathbb{I}(z_i=j)$ denote the size of the $j$-th cluster.
It follows that 
\begin{equation}\label{eq:cond_EPPF_proof}
\P(\Pi_n|\bm{\mu},E_n)
=\frac{K!\prod_{j=1}^KS_j! \mu_{S_j}}{n!\P(E_n|\bm{\mu})}
\end{equation}
and
\begin{equation}\label{eq:marginal_EPPF}
\P(\Pi_n|E_n)
=
\int
\frac{\P(\Pi_n|\bm{\mu},E_n)\P(E_n|\bm{\mu})}{\P(E_n)}P_{\bm{\mu}}(d\bm{\mu})
=
\frac{1}{n!\P(E_n)}\int K!\prod_{j=1}^K |S_j|!\mu_{S_j} P_{\bm{\mu}}(d\bm{\mu})\,.
\end{equation}
The thesis follows from the definition of EPPF.
%Recall that, for any $k\geq 1$ and $(s_1,\dots,s_k)$ with $\sum_{j=1}^ks_j=n$ and $s_j\geq 1$ for all $j$, the EPPF $p^{(n)}(s_1,\dots,s_k)$ is given by $\P(\Pi_n=\{C_1,\dots,C_k\}|E_n)$ where $\{C_1,\dots,C_k\}$ is any partition of $[n]$ such that $|C_j|=s_j$ for $j=1,\dots,k$
\end{proof}

%Prop 2 and Corollary 2. 
\subsection{Proof of Proposition \ref{prop:conditional_eppf} and Corollary \ref{coroll:prediction_rule}}
\label{app:prop2}
\begin{proof}[Proof of Proposition \ref{prop:conditional_eppf} and Corollary \ref{coroll:prediction_rule}]
The expression for the conditional EPPF $p^{(n)}(\cdot;\mu)$ follows directly from Equation \eqref{eq:cond_EPPF_proof}.
The expression for the prediction rule follows from Bayes theorem and 
$$\frac{\P(z_i,\z_{-i}|\bm{\mu},E_n)}{\P(\z_{-i}|\bm{\mu},E_n)}\propto k!\prod_{j=1}^k s_j! \mu_{s_j}\,.$$
\end{proof}

\subsection{Proof of Theorems  \ref{theorem:Kn},  \ref{theorem:fof} and \ref{theorem:microclustering}}
\label{app:thm}
In this section, we prove Theorems  \ref{theorem:Kn}, \ref{theorem:fof} and \ref{theorem:microclustering}. 
The first essential ingredient for our proofs is the Renewal Theorem from the literature on Renewal processes.
\begin{theorem}[Renewal Theorem]\label{thm:renewal_theorem}
Assume $\mu_1>0$ and $\sum_{s=1}s\mu_s\leq \infty$. Then 
$$\P(E_n)\rightarrow \dfrac{1}{\sum_{s=1}s\mu_s} \; \text{as} \; n\to\infty.$$ 
\end{theorem}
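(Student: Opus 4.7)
The plan is to identify Theorem \ref{thm:renewal_theorem} as the classical (Erdős--Feller--Pollard) discrete renewal theorem and verify its two hypotheses in the present setting. First I would translate $\P(E_n)$ into the standard renewal mass function. Marginalizing over $\bm{\mu}$ in the formula $\P(E_n|\bm{\mu})=\sum_{k=1}^n\sum_{(s_1,\dots,s_k)}\mathbb{I}(\sum_j s_j=n)\prod_j\mu_{s_j}$ derived in the proof of Proposition \ref{prop:marginal_EPPF}, one sees that $\P(E_n)$ equals the probability $u_n$ that the discrete random walk $T_k=S_1+\cdots+S_k$, with $S_1,S_2,\ldots$ i.i.d.\ from $\mu$, visits the integer $n$ at some step $k\geq 1$. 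In renewal-theoretic language, $\P(E_n)=u_n$ is precisely the probability of a renewal at time $n$.

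Second, I would check the two hypotheses required by the discrete renewal theorem: aperiodicity and finite mean. Aperiodicity, $\gcd\{s\geq 1:\mu_s>0\}=1$, is immediate from the assumption $\mu_1>0$, since $1$ lies in the support of $\mu$. Finite mean, $m:=\sum_{s\geq 1}s\mu_s<\infty$, is the stated hypothesis (reading the displayed $\leq\infty$ as the intended $<\infty$; otherwise the conclusion must be interpreted as $\P(E_n)\to 0$, which follows from the same argument). Under these two conditions the Erdős--Feller--Pollard theorem yields $u_n\to 1/m$ as $n\to\infty$, which is exactly the conclusion sought.

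Third, if a self-contained argument is preferred over citation, two routes are standard: (a) apply a Tauberian theorem to the generating function of $(u_n)$ using the renewal equation $u_n=\mu_n+\sum_{k=1}^{n-1}\mu_k\,u_{n-k}$; or (b) couple two delayed renewal processes with common step law $\mu$ but different initial delays, so that aperiodicity ensures the coupling time is almost surely finite and a Cesàro argument identifies the limit as $1/m$. The main obstacle in route (b) is precisely establishing almost sure finiteness of the coupling time, which is where the hypothesis $\mu_1>0$ enters in an essential way; in route (a) the corresponding difficulty is verifying the relevant Tauberian side condition on $(u_n)$. Since the result is classical textbook material (see, e.g., Feller, \emph{An Introduction to Probability Theory and Its Applications}, Vol.~I, Ch.~XIII), I would simply invoke it directly in the paper.
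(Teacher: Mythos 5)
Your proposal is correct and matches the paper's approach: the paper likewise offers no self-contained argument and simply cites a textbook statement of the discrete renewal theorem (\citet[Thm.~2.6]{barbu2009semi}), exactly as you propose to do with Feller. Your explicit verification that $\mu_1>0$ gives aperiodicity and your reading of the hypothesis as $\sum_s s\mu_s<\infty$ are both the intended interpretations, though the paper leaves them implicit.
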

We refer to \citet[Thm.2.6]{barbu2009semi} for a proof of the Renewal Theorem. 
The second ingredient is the following technical Lemma that we prove below.
\begin{lemma}\label{lemma:conditioned_as}
Let $X_1,X_2,\dots$ be a sequences of random variables and $E_1,E_2,\dots$ be a sequence of events, with $E_n$ defined on the same probability space of $X_n$.
If $X_n\stackrel{p}\rightarrow c$ as $n\to\infty$ for some $c\in\R$ and $\liminf_{n\to\infty}Pr(E_n)>0$,
then $X_n|E_n\stackrel{p}\rightarrow c$.
\end{lemma}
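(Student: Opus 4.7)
The plan is to unfold the definition of convergence in probability conditional on $E_n$ and reduce it to the unconditional convergence via a direct bound on conditional probabilities. Fix an arbitrary $\varepsilon>0$; I need to show that $\Pr(|X_n-c|>\varepsilon\mid E_n)\to 0$ as $n\to\infty$.

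The key step is the elementary inequality obtained from the definition of conditional probability together with monotonicity: for every $n$,
\begin{equation*}
\Pr(|X_n-c|>\varepsilon\mid E_n)
=\frac{\Pr(\{|X_n-c|>\varepsilon\}\cap E_n)}{\Pr(E_n)}
\leq\frac{\Pr(|X_n-c|>\varepsilon)}{\Pr(E_n)}.
\end{equation*}
This requires $\Pr(E_n)>0$ for the conditional probability to be well-defined, which is guaranteed for all sufficiently large $n$ by the assumption $\liminf_{n\to\infty}\Pr(E_n)>0$.

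Next, I would take limits on both sides. The numerator tends to $0$ by the hypothesis $X_n\stackrel{p}{\to}c$, while the denominator satisfies $\liminf_n\Pr(E_n)=:\delta>0$, so
\begin{equation*}
\limsup_{n\to\infty}\Pr(|X_n-c|>\varepsilon\mid E_n)\leq\frac{\limsup_{n\to\infty}\Pr(|X_n-c|>\varepsilon)}{\liminf_{n\to\infty}\Pr(E_n)}=\frac{0}{\delta}=0.
\end{equation*}
Since $\varepsilon>0$ was arbitrary, this establishes $X_n\mid E_n\stackrel{p}{\to}c$ and completes the proof.

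There is essentially no obstacle here: the lemma is a direct consequence of the trivial bound $\Pr(A\cap B)\leq\Pr(A)$ combined with a positive lower bound on $\Pr(E_n)$. The only subtlety worth flagging is the well-definedness of the conditional probability, which is handled by restricting to $n$ large enough that $\Pr(E_n)\geq\delta/2>0$.
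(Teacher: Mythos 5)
Your proof is correct and follows essentially the same route as the paper's: both bound $\Pr(|X_n-c|>\varepsilon\mid E_n)$ by $\Pr(|X_n-c|>\varepsilon)/\Pr(E_n)$ and pass to the $\limsup$, using the hypothesis $\liminf_n\Pr(E_n)>0$ in the denominator. Your added remark on well-definedness of the conditional probability for large $n$ is a minor but welcome refinement.
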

\begin{proof}
Fix $\epsilon>0$ and define the event
$
A_n=\left\{|X_n-c|>\epsilon \right\}
$.
Since $X_n\stackrel{p}\rightarrow c$ it follows that $\lim_{n\to\infty}Pr(A_n)=0$.
Thus
$$
\limsup_{n\to\infty} Pr(A_n|E_n)
=
\limsup_{n\to\infty}\frac{Pr(A_n\cap E_n)}{Pr(E_n)}
\leq
\frac{\limsup_{n\to\infty}Pr(A_n)}{\liminf_{n\to\infty}Pr(E_n)}
=0\,,
$$
where the last equality follows from $\lim_{n\to\infty}Pr(A_n)=0$ and $\liminf_{n\to\infty}Pr(E_n)>0$.
It follows that, for any $\epsilon>0$,
$\lim_{n\to\infty} Pr(|X_n-c|>\epsilon|E_n)=0$, meaning that $X_n|E_n\stackrel{p}\rightarrow c$.
\end{proof}

\begin{proof}[Proof of Theorem \ref{theorem:Kn}]%\footnote{GZ: this proof does not work when $\left(\sum_{s=1}^\infty s\mu_s\right)^{-1}=0$ because we cannot apply the renewal theorem and the lemma.}
We use $\mathcal{L}(\cdot)$ and $\mathcal{L}(\cdot|\cdot)$ to denote marginal and conditional distributions of random variables. 
By construction of $\Pi_n\sim ESC_{[n]}(\bm{\mu})$, we have 
%\begin{equation}\label{eq:K_n_cond}
%\mathcal{L}(K_n)=\mathcal{L}(Y_n|E_n)\qquad n\geq 1
%\end{equation}
\begin{align}
%\mathcal{L}(K_n)&=\mathcal{L}(Y_n|E_n)& n\geq 1\label{eq:K_n_cond}
\mathcal{L}(K_n)=\mathcal{L}(Y_n|E_n)\;\hbox{ and }\;\mathcal{L}(S_j)=\mathcal{L}(X_j|E_n)&& n\geq 1;j=1,\dots,K_n\label{eq:K_n_cond}
%\\
%\mathcal{L}(S_j)&=\mathcal{L}(X_j|E_n)& j=1,\dots,K_n\nonumber
\end{align}
%and $\mathcal{L}(S_j)=\mathcal{L}(X_j|E_n)$ for $j=1,\dots,K_n$,
where $X_1,X_2,\dots\stackrel{iid}\sim\bm{\mu}$, $Y_n=\max\{k\,:\,\sum_{j=1}^k X_j\leq  n\}$
and 
\begin{equation}\label{eq:E_n_appendix}
E_n=\left\{\omega\in\Omega\,:\, \hbox{ for some }k\geq 1\hbox{ it holds } \sum_{j=1}^k X_j =n
\right\}\,.
\end{equation}
Theorem \ref{thm:renewal_theorem} implies $\liminf_{n\to\infty}Pr(E_n)>0$.
Also, the strong law of large numbers for renewal processes (see e.g.\ 
\citealp[Thm.2.3]{barbu2009semi}) %\footnote{GZ: need to put a more appropriate reference!}) 
implies that 
$n^{-1}Y_n$ converges almost surely to $ (\sum_{s=1}^\infty s\mu_s)^{-1}$, and thus, also in probability. 
Since $n^{-1}Y_n\stackrel{p}\to (\sum_{s=1}^\infty s\mu_s)^{-1}$ and $\liminf_{n\to\infty}Pr(E_n)>0$, it follows by Lemma \ref{lemma:conditioned_as} and Equation \eqref{eq:K_n_cond} that $n^{-1}K_n\stackrel{p}\to (\sum_{s=1}^\infty s\mu_s)^{-1}$, as desired.
%Given $\epsilon>0$, define 
%$$A_n=
%\left\{\omega\in\Omega\,:\, 
%\left|\frac{Y_n}{n}-\frac{1}{\sum_{s=1}s\mu_s}\right|\leq \epsilon
%\right\}\,.$$
%By the strong law of large numbers for for renewal processes (see e.g.\ \citep[Thm.2.3]{BarbuLimnios}\footnote{Put reference}) we have $\lim_{n\to\infty}n^{-1}Y_n= (\sum_{s=1}s\mu_s)^{-1}$ with probability 1, meaning that $\lim_{n\to\infty}Pr(A_n)=1$.
%%Therefore if $\sum_{s=1}^\infty s\mu_s< \infty$ we have $K_n=O(n)$.
%Theorem \ref{thm:renewal_theorem} implies $\liminf_{n\to\infty}Pr(E_n)>0$ and thus, by Lemma \ref{lemma:conditioned_as} we have %$\lim_{n\to\infty}Pr(A_n|E_n)=1$, meaning that
%$$
%\lim_{n\to\infty}Pr\left(\left.\left|\frac{Y_n}{n}-\frac{1}{\sum_{s=1}s\mu_s}\right|\leq \epsilon\;\right|\;E_n\right)
%=
%\lim_{n\to\infty}Pr\left(\left|\frac{K_n}{n}-\frac{1}{\sum_{s=1}s\mu_s}\right|\leq \epsilon\right)=
%1\,,
%$$
%meaning that $\frac{K_n}{n}\stackrel{p}\rightarrow \left(\sum_{s=1}^\infty s\mu_s\right)^{-1}$ as $n\rightarrow\infty $, as desired.
\end{proof}
%Note that in the proofs of Theorems  \ref{theorem:Kn} and \ref{theorem:fof}, we use $X_j$ to denote the $\bm{\mu}$-distributed random variable before conditioning on $E_n$, and $S_j$ the ones after the conditioning, so that $S_j=(X_j|E_n)$ and $K_n=(Y_n|E_n)$.

\begin{proof}[Proof of Theorem \ref{theorem:fof}]
By construction of $\Pi_n\sim ESC_{[n]}(\bm{\mu})$ we have 
\begin{equation}\label{eq:M_n_conditioned}
\mathcal{L}(M_{s,n})=\mathcal{L}(L_{s,n}|E_n)\qquad n\geq 1\,,
\end{equation}
where 
$L_{s,n}=\sum_{j=1}^{Y_n}\1(X_j=s)$, and $X_j$, $Y_n$ and  $E_n$ are defined as in the proof of Theorem \ref{theorem:Kn}.
Since $\1(X_j=s)$ are independent and identically distributed Bernoulli random variables with mean $\mu_s$ and $\lim_{n\to\infty}Y_n=\infty$ almost surely, the strong law of large numbers imply that
\begin{equation}\label{eq:L_over_Y_conv}
\lim_{n\to\infty}\frac{L_{s,n}}{Y_n}
=
\lim_{n\to\infty}\frac{\sum_{j=1}^{Y_n}\1(X_j=s)}{Y_n}
=
\lim_{n\to\infty}\frac{\sum_{j=1}^{n}\1(X_j=s)}{n}
=\mu_s
\qquad \hbox{almost surely}
\,.
\end{equation}
Thus,
$$
\lim_{n\to\infty}\frac{L_{s,n}}{n}
=
\lim_{n\to\infty}\frac{L_{s,n}}{Y_n}\frac{Y_n}{n}
=\mu_s\left(\sum_{\ell=1}^{\infty}\ell\mu_\ell\right)^{-1}
\qquad \hbox{almost surely}
\,,
$$
where we used the fact that $\lim_{n\to\infty}n^{-1}Y_n=\left(\sum_{\ell=1}^{\infty}\ell\mu_\ell\right)^{-1}$ almost surely by the strong law of large numbers for renewal processes (see e.g.\ 
\citealp[Thm.2.3]{barbu2009semi}%\footnote{GZ: need to put a more appropriate reference! Check Grimmet.}).
Since almost sure convergence implies convergence in probability, we have $n^{-1}L_{s,n}\stackrel{p}\to \mu_s\left(\sum_{\ell=1}^{\infty}\ell\mu_\ell\right)^{-1}$, which implies $n^{-1}M_{s,n}\stackrel{p}\to \mu_s\left(\sum_{\ell=1}^{\infty}\ell\mu_\ell\right)^{-1}$ by Equation \eqref{eq:M_n_conditioned} and Lemma \ref{lemma:conditioned_as}, as desired.

Consider now part (b). 
The size of cluster chosen uniformly at random from the clusters of $\Pi_n$ is a random variable
%We are interested in the distribution of 
$S_{U_n}$, where $S_1,\dots, S_{K_n}$ are the sizes of the clusters of $\Pi_n$ and $U_n$ is a random variable satisfying $U_n|\Pi_n\sim \text{Uniform}\{1,\dots,K_n\}$.
For any positive integer $s$, by the definition of $U_n$, we have $Pr(S_{U_n}=s|\Pi_n)=K_n^{-1}M_{s,n}$ and thus
\begin{equation}\label{eq:prob_S_U}
Pr(S_{U_n}=s)=\E[Pr(S_{U_n}=s|\Pi_n)]=\E\left[\frac{M_{s,n}}{K_n}\right]\,.
\end{equation}
By construction of $\Pi_n\sim ESC_{[n]}(\bm{\mu})$, we have 
$$
\mathcal{L}\left(\frac{M_{s,n}}{K_n}\right)=\mathcal{L}\left(\left.\frac{L_{s,n}}{Y_n}\;\right|E_n\right)\qquad n\geq 1\,,
$$
and by Equation \eqref{eq:L_over_Y_conv} we have $Y_n^{-1} L_{s,n}\stackrel{p}\to \mu_s$.
Thus Lemma \ref{lemma:conditioned_as} implies $K_n^{-1}M_{s,n}\stackrel{p}\to \mu_s$.
Since $K_n^{-1}M_{s,n} \in[0,1]$ it follows that $\E\left[K_n^{-1}M_{s,n}\right]\to \mu_s$ and thus, by Equation \eqref{eq:prob_S_U}, $Pr(S_{U_n}=s)\to \mu_s$ as desired.
%where $V_n$ is a random variable satisfying $V_n|X_1,X_2,\dots\sim Unif\{1,\dots,Y_n\}$.
%Consider now the distribution of $X_{V_n}$. For any positive integer $s$, we have
%$$
%Pr(X_{V_n}=s)=\E[Pr(X_{V_n}=s)|X_1,X_2,\dots)]=\E\left[\frac{L_{s,n}}{Y_n}\right]\,.
%$$
%By the arguments above we know $\lim_{n\to\infty}[\frac{L_{s,n}}{Y_n}=\mu_s$ almost surely. Since $\frac{L_{s,n}}{Y_n}\leq 1$ the bounded convergence theorem implies that
%$\lim_{n\to\infty}\E\left[\frac{L_{s,n}}{Y_n}\right]= \mu_s$ and thus $\lim_{n\to\infty}Pr(X_{V_n}=s)\to\mu_s$.
%\footnote{Conclude with an appropriate $A_n$ or similar.}
\end{proof}

\begin{proof}[Proof of Theorem \ref{theorem:microclustering}]
Let $X_j$, $Y_n$ and  $E_n$ be defined as in the proof of Theorem \ref{theorem:Kn}.
By construction of $\Pi_n\sim ESC_{[n]}(\bm{\mu})$,
 we have 
\begin{equation}\label{eq:M_n_cond}
\mathcal{L}(M_{n})=\mathcal{L}(L_n|E_n)\qquad n\geq 1
\end{equation}
where $L_n=\max\{X_1,\dots,X_{Y_n}\}$.
For any $\epsilon>0$ consider 
\begin{align*}
\Pr\left(n^{-1}L_n>\epsilon\right)
&=
\Pr\left(n^{-1}\max\{X_1,\dots,X_{Y_n}\}>\epsilon\right)
\leq
\Pr\left(n^{-1}\max\{X_1,\dots,X_{n}\}>\epsilon\right)
\\
&=1-\Pr\left(\cap_{j=1}^n\{X_j\leq n\epsilon\}\right)=1-\Bigg(\sum_{j=1}^{\lceil\epsilon n\rceil}\mu_j\Bigg)^n\,,
\end{align*}
%\begin{align*}
%\Pr\left(\frac{L_n}{n}>\epsilon\right)
%&=
%\Pr\left(\frac{\max\{X_1,\dots,X_{Y_n}\}}{n}>\epsilon\right)
%\leq
%\Pr\left(\frac{\max\{X_1,\dots,X_{n}\}}{n}>\epsilon\right)
%\\
%&=1-\Pr\left(\cap_{j=1}^n\{X_j\leq n\epsilon\}\right)=1-\Bigg(\sum_{j=1}^{\lceil\epsilon n\rceil}\mu_j\Bigg)^n\,,
%\end{align*}
where the inequality in the first row of the display follows from $Y_n\geq n$.
Since 
$1-x^n\leq n(1-x)$
for all $x\in[0,1]$ and $n\geq 1$, we have
\begin{align*}
1-
\Bigg(\sum_{j=1}^{\lceil\epsilon n\rceil}\mu_j\Bigg)^n
&\;\leq\;
n\Bigg(1-\sum_{j=1}^{\lceil\epsilon n\rceil}\mu_j\Bigg)
\;=\;
n\sum_{j=\lfloor\epsilon n\rfloor+1}^\infty\mu_j \\
&=
\epsilon^{-1}\sum_{j=\lfloor\epsilon n\rfloor+1}^\infty\epsilon n\mu_j
\;\leq\;
\epsilon^{-1}\sum_{j=\lfloor\epsilon n\rfloor+1}^\infty j\mu_j
\rightarrow 0
\quad\hbox{as }n\to\infty\,,
\end{align*}
where the convergence $\lim_{n\to\infty}\sum_{j=\lfloor\epsilon n\rfloor+1}^\infty j\mu_j
= 0$  follows from  
$\sum_{j=1}^\infty j \mu_j<\infty$ and $\lim_{n\to\infty}\lfloor\epsilon n\rfloor+1=\infty$.
Combining the last inequalities we obtain 
$
\lim_{n\to\infty}\Pr\left(n^{-1} L_n>\epsilon\right)
\rightarrow 0
$ or, in other words,
$n^{-1} L_n\stackrel{p}\to 0$ as $n\rightarrow\infty $.
{Thus, by Equation \eqref{eq:M_n_cond}, Lemma \ref{lemma:conditioned_as}, and $\liminf_{n\to\infty}Pr(E_n)>0$ (which follows from Theorem \ref{thm:renewal_theorem}), we obtain $n^{-1} M_n \stackrel{p}\rightarrow 0$ as $n\rightarrow\infty $, as desired.}
%Combining the last inequalities we obtain 
%$
%\lim_{n\to\infty}\Pr\left(n^{-1} L_n>\epsilon\right)
%\rightarrow 0
%$ or, in other words,
%$\lim_{n\to\infty}Pr(A_n)=1$ with 
%$A_n=\{n^{-1} L_n\geq\epsilon\}$.
%Thus, by Lemma \ref{lemma:conditioned_as} and $\liminf_{n\to\infty}Pr(E_n)>0$, which follows from Theorem \ref{thm:renewal_theorem}, we obtain
%$$
%\lim_{n\to\infty}Pr\left(\left.\frac{L_n}{n}\leq \epsilon\;\right|\;E_n\right)
%=
%\lim_{n\to\infty}Pr\left(\frac{M_n}{n}\leq \epsilon\right)=
%1\,,
%$$
%meaning that $\frac{M_n}{n}\stackrel{p}\rightarrow 0$ as $n\rightarrow\infty $, as desired.
\end{proof}

\section{{Samplers for Posterior Inference}} \label{supp:samplers}
{In this section, we provide additional details regarding the samplers used for posterior inference.}
In the following derivations, we use the fact that, under the $ESC_{[n]}(P_{\bm{\mu}})$ model, the joint distribution of $\bm{\mu}$ and $\Pi_n$ is
\begin{equation}\label{eq:joint_mu_Pi}
\P(d\bm{\mu},\Pi_n)
=
\frac{P_{\bm{\mu}}(d\bm{\mu})}{\P(E_n)}
\frac{K!}{n!}\prod_{j=1}^K S_j! \mu_{S_j}\,,
\end{equation}
{which can be easily derived using Equation \eqref{eq:eppf_marginal}.}
It follows that the conditional distribution of $\bm{\mu}$ given $\Pi_n$ satisfies
\begin{equation}\label{eq:mu_given_Pi}
\P(d\bm{\mu}|\Pi_n)
\propto P_{\bm{\mu}}(d\bm{\mu})  \prod_{j=1}^K S_j! \mu_{S_j}\,.
\end{equation}
The precise mathematical interpretation of Equation \eqref{eq:mu_given_Pi} is that the Radon–Nikodym derivative between the distribution of $\bm{\mu}$ conditional on $\Pi_n$ and the distribution $P_{\bm{\mu}}$ is proportional to $\prod_{j=1}^K S_j! \mu_{S_j}$.
The key aspect of Equation \eqref{eq:mu_given_Pi} is that the conditional distribution of $\bm{\mu}$ does not depend on the intractable term $\P(E_n|\bm{\mu})$, which makes the updates of $\bm{\mu}|\Pi_n$ in the MCMC algorithms for posterior sampling straightforward.

%The expressions in \eqref{eq:prediction_rule_conditional} allow to sample from the conditional distribution of $\Pi_n|\bm{\mu}$, and more generally of $\Pi_n|\bm{\mu},\textbf{x}$ when \eqref{eq:prediction_rule_conditional} is combined with likelihood terms for some observed dataset $\textbf{x}$ (see e.g.\ Section \ref{sec:RL}).
%The other ingredient required to perform posterior inference with standard MCMC routines (see Section \ref{sec:Inferences} below) is the conditional distribution of $\bm{\mu}$ given $\Pi_n$, which is characterized as follows:
%\begin{equation}\label{eq:mu_given_Pi}
%\P(d\bm{\mu}|\Pi_n)
%\propto P_{\bm{\mu}}(d\bm{\mu})  \prod_{j=1}^k s_j! \mu_{s_j}\,.
%\end{equation}
%%\begin{equation}\label{eq:mu_given_Pi}
%%\P(d\bm{\mu}|\Pi_n)
%%\propto P_{\bm{\mu}}(d\bm{\mu})  K!\prod_{j=1}^K S_j! \mu_{S_j}\,.
%%\end{equation}

\subsection{ESC-NB model}
Recall that, for the ESC-NB model, $\bm{\mu}=\bm{\mu}(r,p)$ is a deterministic function of $r$ and $p$ specified by Equation  \eqref{eq:mu_esc_NB}.
\begin{proof}[Derivation of Equation \eqref{eq:cond_par_ESC_NB}]
Since $r,p$ are conditionally independent of $\x$ given $\Pi_n$ we have $\Pr(r,p|\Pi_n,\x)=\Pr(r,p|\Pi_n) $.
Then, combining Equation \eqref{eq:mu_given_Pi} with Equation \eqref{eq:mu_esc_NB} and the prior specification $r\sim Gamma(\eta_r,s_r)$, $p\sim Beta(u_p,v_p)$, we obtain
\begin{align*}
\Pr(r,p|\Pi_n,\x) &=\Pr(r,p|\Pi_n) \propto 
\left(\frac{r^{\eta_r-1}e^{-\frac{r}{s_r}}}{\Gamma(\eta_r)s_r^{\eta_r}}\right)
\left(\frac{p^{ u_p-1}(1-p)^{ v_p-1}}{B( u_p, v_p)}\right)
%\Pr(r)\,\Pr(p)\,
\prod_{j=1}^K S_j! \mu_{S_j}\nonumber\\
&\propto
r^{\eta_r-1}e^{-\frac{r}{s_r}}
p^{u_p-1}(1-p)^{ v_p-1}
\prod_{j=1}^K S_j! \gamma\frac{\Gamma(S_j+r)p^{S_j} }{\Gamma(r)S_j!}
\\&
\propto
r^{\eta_r-1}e^{-\frac{r}{s_r}}
p^{ n+u_p-1}(1-p)^{ v_p-1}
\gamma^K\prod_{j=1}^K\frac{\Gamma(S_j+r)}{\Gamma(r)}\,,
\end{align*}
which proves Equation \eqref{eq:cond_par_ESC_NB}.
%Since $r\sim Gamma(\eta_r,s_r)$, $p\sim Beta(u_p,v_p)$ and $\Pi_n|r,p\sim ESC_{[n]}(\bm{\mu})$ with $\bm{\mu}$ defined as in \eqref{eq:mu_esc_NB}, using Proposition \ref{prop:conditional_eppf} we have
%\begin{align*}
%%\Pr(r,p|\Pi_n,\x) &\propto \Pr(\Pi_n,r,p)=
%\Pr(r)\,\Pr(p)\,\Pr(\Pi_n|r,p)%\nonumber\\
%&\propto
%\left(\frac{r^{\eta_r-1}e^{-\frac{r}{s_r}}}{\Gamma(\eta_r)s_r^{\eta_r}}\right)
%\left(\frac{p^{ u_p-1}(1-p)^{ v_p-1}}{B( u_p, v_p)}\right)
%\frac{k!}{n!}\prod_{j=1}^k s_j! \gamma\frac{\Gamma(s_j+r)p^{s_j} }{\Gamma(r)s_j!}
%\\&
%\propto
%r^{\eta_r-1}e^{-\frac{r}{s_r}}
%p^{ u_p-1}(1-p)^{ v_p-1}
%\gamma^K\prod_{j=1}^K\frac{\Gamma(S_j+r)}{\Gamma(r)}\,.
%\end{align*}
\end{proof}

\begin{proof}[Derivation of Equation \eqref{eq:full_cond_ESCNB}]
Given the dependence structure of $(r,p)$, $\Pi_n$, and $\x$, we have $\Pr(\Pi_n|r,p,\x)\propto\Pr(\Pi_n|r,p)\Pr(\x|\Pi_n)$ and thus
\begin{equation}
\P(z_i=j|\z_{-i},\x,r,p)
\propto
\Pr(\x|\z_{-i},z_i=j)\times
\P(z_i=j|\z_{-i},r,p)\,.
\end{equation}
Corollary \ref{coroll:prediction_rule} implies
\begin{align*}
\P(z_i=j|\z_{-i},r,p)
&\propto
\left\{
\begin{array}{ll}
(S_j+1)
\frac{\mu_{(S_j+1)}}{\mu_{S_j}}& \hbox{if }j=1,\dots,K_{-i},
\\
(K_{-i}+1)\mu_1 & \hbox{if }j=K_{-i}+1\,,
\end{array}
\right.
\end{align*}
where, by  Equation \eqref{eq:mu_esc_NB}, we have $\mu_1%=\gamma\frac{\Gamma(1+r)p }{\Gamma(r)} 
=\gamma\,r\,p$ and 
\begin{align*}
\frac{\mu_{(S_j+1)}}{\mu_{S_j}}
=
\frac{\gamma\frac{\Gamma(S_j+1+r)p^{S_j+1} }{\Gamma(r)(S_j+1)!}}{\gamma\frac{\Gamma(S_j+r)p^{S_j} }{\Gamma(r)S_j!}}
=
p
\frac{\Gamma(S_j+1+r) }{\Gamma(S_j+r)}
\frac{S_j!}{(S_j+1)!}
=
p\frac{S_j+r}{S_j+1}.
\end{align*}
Therefore
\begin{align*}
\P(z_i=j|\z_{-i},r,p)
&\propto
\left\{
\begin{array}{ll}
(S_j+1)p
\frac{S_j+r}{S_j+1}
& \hbox{if }j=1,\dots,k_{-i},
\\
(K_{-i}+1)\gamma\,r\,p & \hbox{if }j=k_{-i}+1\,,
\end{array}
\right.
\\
&\propto
\left\{
\begin{array}{ll}
%(s_j+1)\frac{\mu_{(s_j+1)}}{\mu_{s_j}}=
S_j+r & \hbox{if }j=1,\dots,K_{-i},
\\
(K_{-i}+1)\gamma r & \hbox{if }j=K_{-i}+1\,.
\end{array}
\right.
\end{align*}
%\begin{align*}
%\P(z_i=j|\z_{-i},r,p)
%&\propto
%\left\{
%\begin{array}{ll}
%(S_j+1)
%\frac{\mu_{(S_j+1)}}{\mu_{S_j}}& \hbox{if }j=1,\dots,k_{-i},
%\\
%(K_{-i}+1)\mu_1 & \hbox{if }j=k_{-i}+1\,,
%\end{array}
%\right.
%\\
%&\propto
%\left\{
%\begin{array}{ll}
%(S_j+1)
%\left(\gamma\frac{\Gamma(S_j+1+r)p^{S_j+1} }{\Gamma(r)(S_j+1)!}\right)
%\left(\gamma\frac{\Gamma(S_j+r)p^{S_j} }{\Gamma(r)S_j!}\right)^{-1}
%& \hbox{if }j=1,\dots,k_{-i},
%\\
%(K_{-i}+1)\gamma\frac{\Gamma(1+r)p }{\Gamma(r)} 
%& \hbox{if }j=k_{-i}+1\,,
%\end{array}
%\right.
%\\
%&\propto
%\left\{
%\begin{array}{ll}
%(S_j+1)p
%\frac{\Gamma(S_j+1+r) }{\Gamma(S_j+r)}
%\frac{S_j!}{(S_j+1)!}
%& \hbox{if }j=1,\dots,k_{-i},
%\\
%(K_{-i}+1)\gamma\,p\,r & \hbox{if }j=k_{-i}+1\,,
%\end{array}
%\right.
%\\
%&\propto
%\left\{
%\begin{array}{ll}
%(S_j+1)p
%\frac{S_j+r}{S_j+1}
%& \hbox{if }j=1,\dots,k_{-i},
%\\
%(K_{-i}+1)\gamma\,p\,r & \hbox{if }j=k_{-i}+1\,,
%\end{array}
%\right.
%\\
%&\propto
%\left\{
%\begin{array}{ll}
%%(s_j+1)\frac{\mu_{(s_j+1)}}{\mu_{s_j}}=
%S_j+r & \hbox{if }j=1,\dots,K_{-i},
%\\
%(K_{-i}+1)\gamma r & \hbox{if }j=K_{-i}+1\,,
%\end{array}
%\right.
%\end{align*}
\end{proof}

\subsection{ESC-D model}
\begin{proof}[Derivation of Equation \eqref{eq:ESC_D_parameters}]
While in the ESC-NB model $\bm{\mu}$ is a deterministic function of $r$ and $p$, for the ESC-D model we have $\bm{\mu}|r,p\sim Dir(\alpha,\bm{\mu}^{(0)})$, where $\bm{\mu}^{(0)}=\bm{\mu}^{(0)}(r,p)$ is defined in Equation \eqref{eq:mu0_esc}.
Thus, integrating out $\bm{\mu}$ in Equation \eqref{eq:joint_mu_Pi} and using $r\sim Gamma(\eta_r,s_r)$ and $p\sim Beta(u_p,v_p)$, we obtain 
\begin{align}
P(r,p,\Pi_n) &=
\frac{1}{P(E_n)}\left(\frac{r^{\eta_r-1}e^{-\frac{r}{s_r}}}{\Gamma(\eta_r)s_r^{\eta_r}}\right)
\left(\frac{p^{ u_p-1}(1-p)^{ v_p-1}}{B( u_p, v_p)}\right)
\E_{\bm{\mu}\sim Dir(\alpha,\bm{\mu}^{(0)})}\left[
\frac{K!}{n!}\prod_{j=1}^K S_j! \mu_{S_j}
\right]\,.\label{eq:joint_ESC_D}
\end{align}
Using $M_{s,n}=\sum_{j=1}^K\1(S_j=s)$ and standard expressions for the moments of the Dirichlet distribution we obtain
\begin{align}
\E_{\bm{\mu}\sim Dir(\alpha,\bm{\mu}^{(0)})}\left[
\frac{K!}{n!}\prod_{j=1}^K S_j! \mu_{S_j}
\right]&=\frac{K!}{n!}
\left(\prod_{s=1}^{M_n} s!^{M_{s,n}}\right)
\E_{\bm{\mu}\sim Dir(\alpha,\bm{\mu}^{(0)})}\left[
\prod_{s=1}^{M_n} \mu_{s}^{M_{s,n}}
\right]
\nonumber\\
&=
\frac{K!}{n!}
\left(\prod_{s=1}^{M_n} s!^{M_{s,n}}\right)
\frac{\Gamma(\alpha)}{\Gamma(K+\alpha)}
\prod_{s=1}^{M_n} \frac{\Gamma(M_{s,n}+\alpha\,\mu^{(0)}_s)}{\Gamma(\alpha\,\mu^{(0)}_s)}
\nonumber\\
&=\frac{K!}{n!}
\frac{\Gamma(\alpha)}{\Gamma(K+\alpha)}
\prod_{s=1}^{M_n} \frac{s!^{M_{s,n}}\Gamma(M_{s,n}+\alpha\,\mu^{(0)}_s)}{\Gamma(\alpha\,\mu^{(0)}_s)}\,.
\label{eq:Dir_moments}
\end{align}
Combining Equations \eqref{eq:joint_ESC_D} and \eqref{eq:Dir_moments} we obtain that the joint distribution of $r$, $p$ and $\Pi_n$ under the ESC-D model satisfies
\begin{align}
P(r,p,\Pi_n) &\propto
\frac{r^{\eta_r-1}e^{-\frac{r}{s_r}}p^{ u_p-1}(1-p)^{ v_p-1}K!}{\Gamma(K+\alpha)}
\prod_{s=1}^{M_n} \frac{s!^{M_{s,n}}\Gamma(M_{s,n}+\alpha\,\mu^{(0)}_s)}{\Gamma(\alpha\,\mu^{(0)}_s)}
\,.\label{eq:joint_ESC_D_prop}
\end{align}
%and thus
%\begin{align}
%P(r,p|\Pi_n) &\propto
%r^{\eta_r-1}e^{-\frac{r}{s_r}}p^{ u_p-1}(1-p)^{ v_p-1}
%\prod_{s=1}^{M_n} \frac{\Gamma(M_{s,n}+\alpha\,\mu^{(0)}_s)}{\Gamma(\alpha\,\mu^{(0)}_s)}
%\,.
%\end{align}
The expression in Equation \eqref{eq:ESC_D_parameters} follows from Equation \eqref{eq:joint_ESC_D_prop} and the fact that $\Pr(r,p|\Pi_n,\x)=\Pr(r,p|\Pi_n) $ because $r$ and $p$ are conditionally independent of $\x$ given $\Pi_n$.
\end{proof}

\section{Importance Sampler for $ESC$ models}\label{sec:IS}

In this section we describe an importance sampler that can be used to generate weighted samples from random partitions $\Pi_n\sim ESC_{[n]}(P_{\bm{\mu}})$.
The propose algorithm is not a fully standard importance sampler and thus we prove its validity in Theorem \ref{eq:thm_IS}.
In the context of Bayesian inferences, this algorithm can be used to generate samples from a $ESC_{[n]}(P_{\bm{\mu}})$ prior distribution for random partition.
Unlike the rejection sampler described in the main document, we expect the importance sampler described here to be efficient even when $\E_{\bm{\mu}\sim P_{\bm{\mu}}}\left[(\sum_{s=1}s\mu_s)^{-1}\right]$ becomes small.
\begin{alg}\label{alg:IS}\emph{(Importance Sampler for ESC models)}
\begin{enumerate}
\item Sample $\bm{\mu}\sim P_{\bm{\mu}}$ and $S_1,\dots,S_{R}|\bm{\mu}\stackrel{iid}\sim\bm{\mu}$ until the first value $R$ such that $\sum_{j=1}^{R}S_j\geq n$.
\item For $k=1,\dots,R$ define $D_k=n-\sum_{j=1}^{k-1}S_j$ and %$w_k=\frac{\mu_{D_k}}{\P(E_n)}$. Set
 $W=\sum_{k=1}^{R}\mu_{D_k}$.
\item Sample $K$ from $\{1,\dots,R\}$ with probability $\P(K=k)=\mu_{D_k}/W$, and 
%set $(S_1,\dots,S_K)=(S_1,\dots,S_{K-1},D_K)$.
%\item 
define the cluster allocation variables $(z_1,\dots,z_n)$ as a uniformly at random permutation of the vector 
\begin{equation}\label{eq:indices_order_2}
(\underbrace{1,\ldots,1}_\text{$S_1$
  times},\underbrace{2,\ldots,2}_\text{$S_2$
  times},\ldots\ldots,\underbrace{K-1,\ldots,K-1}_\text{$S_{K-1}$ times},\underbrace{K,\ldots,K}_\text{$D_K$ times}).
\end{equation}
\item Output the resulting partition  $\Pi_n$ as a weighted sample from the model $ESC_{[n]}(P_{\bm{\mu}})$ with importance weight $\P(E_n)^{-1}W$.
\end{enumerate}
\end{alg}
Intuitively, given each vector of cluster sizes  $(S_1,\dots,S_{k-1})$, Algorithm \ref{alg:IS} considers the probability $\mu_{D_k}$ of sampling $S_k=D_k$ %, and thus having $\sum_{j=1}^{k}S_j= n$,
 and weights the resulting vector of cluster sizes $(S_1,\dots,S_{k-1},D_k)$ accordingly.
The following theorem shows that the algorithm is valid, in the sense that it returns weighted samples from the distribution $ESC_{[n]}(P_{\bm{\mu}})$ that produce unbiased and consistent Monte Carlo estimators like standard Importance Sampling does. 
\begin{theorem}\label{eq:thm_IS}
For every real-valued function $h$ defined over the space of partitions of $[n]$ we have 
\begin{equation}\label{eq:IS_unbiased}
\E_{(\Pi_n,W)\sim Alg\ref{alg:IS}}[\P(E_n)^{-1}W\,h(\Pi_n)]
=
\E_{\Pi_n\sim ESC_{[n]}(P_{\bm{\mu}})}[h(\Pi_n)]\,,
\end{equation}
where the notation $(\Pi_n,W)\sim Alg\ref{alg:IS}$ means that $\Pi_n$ is a partition produced by Algorithm \ref{alg:IS} with associated importance weight $\P(E_n)^{-1}W$.
Also, given a sequence $(\Pi^{(t)}_n,W^{(t)})_{t=1}^\infty\stackrel{iid}\sim Alg\ref{alg:IS}$, % of weighted samples produced by independent runs of Algorithm \ref{alg:IS}, 
we have
\begin{equation}\label{eq:self_norm_IS}
\frac{\sum_{t=1}^TW^{(t)}\,h(\Pi_n^{(t)})}{\sum_{t=1}^TW^{(t)}}
\stackrel{a.s.}\to
\E_{\Pi_n\sim ESC_{[n]}(P_{\bm{\mu}})}[h(\Pi_n)]
\qquad\hbox{as }T\to\infty\,.
\end{equation}
\end{theorem}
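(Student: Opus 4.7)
The plan is to establish the unbiasedness identity \eqref{eq:IS_unbiased} by an explicit computation of the law induced by Algorithm \ref{alg:IS}, and then to derive the self-normalized convergence \eqref{eq:self_norm_IS} as a standard corollary of the strong law of large numbers. The engine behind unbiasedness is the observation that the selection probability $\mu_{D_k}/W$ used in Step 3 multiplies against the importance weight $\P(E_n)^{-1}W$ to give $\P(E_n)^{-1}\mu_{D_k}$, so that $W$ cancels entirely in the weighted density of each sampled trajectory.

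Concretely, I would fix $\bm{\mu}$ and a target partition $\Pi_n$ with cluster sizes $(S_1,\dots,S_K)$, and compute $\E[W\,\1(\text{sampler outputs }\Pi_n)\mid\bm{\mu}]$. Only trajectories in which the ordered tuple $(\tilde{S}_1,\dots,\tilde{S}_{K-1},D_K)$ is a permutation of $(S_1,\dots,S_K)$, the selected index equals $K$, and the uniform label-permutation in Step 3 realizes the specified partition contribute. After the $W$-cancellation, the tail samples $\tilde{S}_K,\dots,\tilde{S}_R$ appear only through a continuation piece of the form $\sum_{R\ge K}\sum_{\tilde{s}_K,\dots,\tilde{s}_R}\prod_{j=K}^R\mu_{\tilde{s}_j}\,\1(R\text{ is the first overshoot of the remaining threshold})$, which equals $1$ because a random walk on $\mathbb{Z}_{>0}$ with iid steps crosses any finite threshold within at most $D_K$ steps almost surely.

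The main obstacle is the combinatorial accounting that keeps track of multiplicities $M_{s,n}$ in the multiset $\{S_1,\dots,S_K\}$. For each of the $K!/\prod_s M_{s,n}!$ distinct ordered tuples permuting this multiset, the probability that Step 3's uniform label-permutation produces the specified $\Pi_n$ equals $\prod_s M_{s,n}!\cdot\prod_j S_j!/n!$: the factor $\prod_s M_{s,n}!$ counts label-to-cluster bijections that respect cluster sizes, while $\prod_j S_j!/n!$ is the probability of any specific label vector under uniform permutation. Multiplying by the trajectory density $\prod_{j=1}^K\mu_{S_j}$ (identical across orderings) and summing, the $\prod_s M_{s,n}!$ factors cancel cleanly and yield
\begin{equation*}
\E\bigl[W\,\1(\text{sampler outputs }\Pi_n)\,\bigm|\,\bm{\mu}\bigr]=\frac{K!}{n!}\prod_{j=1}^K S_j!\,\mu_{S_j}.
\end{equation*}
Integrating against $P_{\bm{\mu}}$ and dividing by $\P(E_n)$ reproduces exactly the target mass of $\Pi_n$ given by \eqref{eq:joint_mu_Pi}, so multiplying by $h(\Pi_n)$ and summing over the finitely many partitions of $[n]$ gives \eqref{eq:IS_unbiased}.

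For \eqref{eq:self_norm_IS}, I would apply the strong law of large numbers to the iid sequence $(\Pi_n^{(t)},W^{(t)})$. Specializing \eqref{eq:IS_unbiased} to $h\equiv 1$ gives $\E[W]=\P(E_n)$, so $T^{-1}\sum_{t=1}^T W^{(t)}\to\P(E_n)$ almost surely; since $h$ is bounded on the finite set of partitions of $[n]$, \eqref{eq:IS_unbiased} also gives $\E[W\,h(\Pi_n)]=\P(E_n)\,\E_{\Pi_n\sim ESC_{[n]}(P_{\bm{\mu}})}[h(\Pi_n)]$, so $T^{-1}\sum_{t=1}^T W^{(t)} h(\Pi_n^{(t)})$ converges almost surely to the same limit. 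Taking the ratio yields \eqref{eq:self_norm_IS}.
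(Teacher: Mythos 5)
Your proof is correct and follows essentially the same route as the paper's: the same cancellation of $W$ against the selection probability $\mu_{D_K}/W$, the same combinatorial reduction to the marginal EPPF term $\frac{K!}{n!}\prod_{j=1}^K S_j!\,\mu_{S_j}$, and the same strong-law argument applied separately to numerator and denominator for the self-normalized estimator. The only cosmetic difference is that the paper sidesteps the stopping time $R$ by reformulating the algorithm to always draw $n$ sizes with $\mu_{D_k}=0$ whenever $D_k\le 0$, whereas you sum out the tail $\tilde S_K,\dots,\tilde S_R$ directly; both yield the same factor of $1$.
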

\begin{proof}
By Proposition \ref{prop:marginal_EPPF}, or equivalently by \eqref{eq:marginal_EPPF}, we have
\begin{align}\label{eq:proof_IS_eq1}
\E_{\Pi_n\sim ESC_{[n]}(P_{\bm{\mu}})}[h(\Pi_n)]=
\frac{1}{n!\P(E_n)}
\sum_{\Pi_n}h(\Pi_n)
\int K!\prod_{j=1}^K |S_j|!\mu_{S_j} P_{\bm{\mu}}(d\bm{\mu})\,,
\end{align}
where the sum over $\Pi_n$ runs over all partitions of $[n]$.
We now consider the expectation $\E_{(\Pi_n,W)\sim Alg\ref{alg:IS}}[\P(E_n)^{-1}W\,h(\Pi_n)]$ and show that it is equal to the same expression.
To simplify the proof we consider an equivalent formulation of Algorithm \ref{alg:IS}, where we simulate $\bm{\mu}\sim P_{\bm{\mu}}$ and $S_1,\dots,S_{n}|\bm{\mu}\stackrel{iid}\sim\bm{\mu}$ in Step 1; we set $W=\sum_{k=1}^n\mu_{D_k}$ with $\mu_{D_k}=0$ when $D_k\leq 0$ in Step 2; we sample $K$ from $\{1,\dots,n\}$ with probability $\P(K=k)=\mu_{D_k}/W$ in Step 3 and leave the rest of the algorithm unchanged. 
The latter is an equivalent formulation of Algorithm \ref{alg:IS} that is computationally less efficient because it generates additional variables $S_{R+1},\dots,S_n$ that are not necessary in practice, but is slightly simpler to analyse because it avoids the use of the auxiliary variable $R$.
In order to keep the notation light, we denote $\S=(S_1,\dots,S_n)$ and $\z=(z_1,\dots,z_n)$ and we denote random variables (e.g.\ $\S$, $K$ and $\z$) and their possible realizations with the same symbols.
We have
\begin{align}
&\E_{(\Pi_n,W)\sim Alg\ref{alg:IS}}[W\,h(\Pi_n)]\nonumber
\\&=
\int
\sum_{\S\in\{1,2,\dots\}^n}
\P(\S|\bm{\mu})
\sum_{K=1}^n
\P(K|\S,\bm{\mu})
\sum_{\z}
\P(z|K,\S) W h(\Pi_n(\z)) P_{\bm{\mu}}(d\bm{\mu})\nonumber
\\&=
\int\sum_{\S\in\{1,2,\dots\}^n}
\left(\prod_{j=1}^n\mu_{S_j}\right)
\sum_{K=1}^n
\frac{\mu_{D_K}}{\sum_{k=1}^n \mu_{D_k}}
\sum_{\z}
\frac{\left(\prod_{j=1}^{K-1} S_j!\right)D_{K}!}{n!}
\left(\sum_{k=1}^n \mu_{D_k}\right) h(\Pi_n(\z)) P_{\bm{\mu}}(d\bm{\mu})\nonumber
\\&=
\frac{1}{n!}
\int
\sum_{\S\in\{1,2,\dots\}^n}
\sum_{K=1}^n
\left(\prod_{j=1}^n\mu_{S_j}\right)
\mu_{D_K}\left(\prod_{j=1}^{K-1} S_j!\right)D_{K}!
\sum_{\z}
h(\Pi_n(\z))P_{\bm{\mu}}(d\bm{\mu})\,,\label{eq:proof_IS_eq3}
\end{align}
where the sum over $\z$ runs over all the vectors that can be obtained as a permutation of the vector in \eqref{eq:indices_order_2}.
Reorganizing the sum and exploiting the fact that $\z$ and $\Pi_n$ depend only on $(S_1,\dots,S_{K-1})$ and $K$, we can integrate out $(S_{K},\dots,S_{n})$ and write \eqref{eq:proof_IS_eq3} as
\begin{align*}
%&\E_{(\Pi_n,W)\sim Alg\ref{alg:IS}}[W\,h(\Pi_n)]
%=
\frac{1}{n!}
\sum_{K=1}^n
\sum_{(S_1,\dots,S_{K-1})\in\{1,2,\dots\}^{K-1}}
\sum_{\z}
h(\Pi_n(\z))
\int
\left(\prod_{j=1}^{K-1}\mu_{S_j}S_j!\right)
\mu_{D_K}D_K!
P_{\bm{\mu}}(d\bm{\mu})\,.
\end{align*}
Re-writing the sums above in terms of the resulting partition $\Pi_n$, and exploiting the fact that each partition $\Pi_n$ can be obtained through $K!$ different cluster assignments $\z$, we have
\begin{align}\label{eq:proof_IS_eq2}
&\E_{(\Pi_n,W)\sim Alg\ref{alg:IS}}[W\,h(\Pi_n)]
=
\frac{1}{n!}
\sum_{\Pi_n}h(\Pi_n)
K!\left(\prod_{j=1}^{K-1} |S_j|!\mu_{S_j}\right)\mu_{D_K}D_K!\,,
\end{align}
where the sum over $\Pi_n$ runs over all partitions of $[n]$ and the cluster sizes of $\Pi_n$ are denoted as $(S_1,\dots,S_{K-1},D_K)$ for coherence with the notation of Algorithm \ref{alg:IS}.
Comparing \eqref{eq:proof_IS_eq1} and \eqref{eq:proof_IS_eq2} we obtain \eqref{eq:IS_unbiased}.

The almost sure convergence in \eqref{eq:self_norm_IS} follows by applying the strong law of large numbers to both numerator and denominator in the fraction on the left-hand side, and then noting that by \eqref{eq:IS_unbiased} we have 
$$
\frac{\E_{(\Pi_n,W)\sim Alg\ref{alg:IS}}[W\,h(\Pi_n)]}{\E_{(\Pi_n,W)\sim Alg\ref{alg:IS}}[W]}
=
\frac{\P(E_n)
\E_{\Pi_n\sim ESC_{[n]}(P_{\bm{\mu}})}[h(\Pi_n)]}{\P(E_n)}
=
\E_{\Pi_n\sim ESC_{[n]}(P_{\bm{\mu}})}[h(\Pi_n)]
\,.
$$
\end{proof}

Note that the normalized importance weight $\P(E_n)^{-1}W$ involves the constant $\P(E_n)$ that is typically not available in closed form.
However, this is not a problem because the self-normalized importance sampling estimator defined in \eqref{eq:self_norm_IS} is not sensitive to multiplicative constants in the importance weights. Thus, one can directly use $W$ as an importance weight, ignoring the unknown constant $\P(E_n)$.

%\clearpage

\section{Likelihood Derivation for Entity Resolution} \label{supp:likelihood}
{In this section, we provide the derivation of the likelihood that is used in our ER task.}
Recall that the observed data $\x$ consist of $n$ records $(x_i)_{i=1}^n$ and each record $x_i$ contains $L$ fields $(x_{i\ell})_{\ell=1}^L$. Each field $\ell$ is associated to two hyperparameters: a distortion probability $\beta_\ell\in(0,1)$ and a density vector
$\btheta_{\ell}=(\theta_{\ell d})_{d=1}^{D_\ell}\in[0,1]^{D_\ell}$, where $D_\ell$ denotes the number of categories for field $\ell$ and $\sum_{d=1}^{D_\ell}\theta_{\ell d}=1$.
As mentioned in Section \ref{sec:RL}, we assume that clusters are conditionally independent given the partition $\Pi_n$ and the hyperparameters $\bbeta=(\beta_\ell)_{\ell=1}^L$ and $\btheta=(\btheta_{\ell})_{\ell=1}^L$, resulting in 

\begin{equation}\label{eq:independent_fields_assumption}
P(\x|\Pi_n,\bbeta,\btheta)
=
\prod_{j=1}^K
\prod_{\ell=1}^L
P(\x_{j\ell}|\beta_\ell,\btheta_\ell)\,,
\end{equation}
where $\x_{j\ell}=\{x_{i\ell}\,:\,i\in C_j\}$.
For each $C_j \in \Pi_n$, the distribution of $\x_{j\ell}|\beta_\ell,\btheta_\ell$ is given by
\begin{align}
y_{j\ell}&\sim \btheta_\ell\label{eq:y_gen1}\\
x_{i\ell}|y_{j\ell}&\stackrel{iid}\sim \beta_\ell\btheta_\ell + (1-\beta_\ell)\delta_{y_{j\ell}}\qquad i\in C_j
\,,\label{eq:x_giv_y1}
\end{align}
where $y_{j\ell}$ represent the correct $l$-th feature of the entity associated to cluster $C_j$, and $\beta_\ell$ is the probability of distortion in feature $\ell$.
Integrating out $y_{j\ell}$ from Equations \eqref{eq:y_gen1} and \eqref{eq:x_giv_y1} it follows
\begin{align}
P(\x_{j\ell}|\beta_\ell,\btheta_\ell)
&=
\sum_{d=1}^{D_\ell}
P(y_{j\ell}=d|\btheta_\ell)
\prod_{i\in C_j}P(x_{i\ell}|\beta_\ell,y_{j\ell}=d)\nonumber\\
&=
\sum_{d=1}^{D_\ell}
\theta_{\ell d}
\prod_{i\in C_j}(\beta_\ell\theta_{\ell x_{n\ell}}+(1-\beta_\ell)\mathbbm{1}(x_{i\ell}=d))
\nonumber\\
&=
\left(\prod_{i\in C_j}\beta_\ell\theta_{\ell x_{i\ell}}\right)
\sum_{d=1}^{D_\ell}
\theta_{\ell d}
\prod_{i\in C_j}
\frac{(\beta_\ell\theta_{\ell x_{i\ell}}+(1-\beta_\ell)\mathbbm{1}(x_{i\ell}=d))}{\beta_\ell\theta_{\ell x_{i\ell}}}
\label{eq:integrating_y_first}
\end{align}
To proceed we denote by $x^{(j)}_{1\ell},\dots,x^{(j)}_{m^{(j)}\ell}$ the collection of unique values in $\x_{j\ell}$ and by $q^{(j)}_{1\ell},\dots,q^{(j)}_{m^{(j)}\ell}$ the corresponding frequencies, meaning that for each $i\in\{1,\dots,m^{(j)}\}$ the value $x^{(j)}_{i\ell}$ appears exactly $q^{(j)}_{i\ell}$ times in $\x_{j\ell}$.
Then from Equation \eqref{eq:integrating_y_first} we have 
\begin{align}\label{eq:integrating_y_second}
P(\x_{j\ell}|\beta_\ell,\btheta_\ell)
&=
\left(\prod_{i\in C_j}\beta_\ell\theta_{\ell x_{i\ell}}\right)
f(\x_{j\ell},\beta_\ell,\btheta_\ell)\,,
\end{align}
where
\begin{align}\label{eq:f}
f(\x_{j\ell},\beta_\ell,\btheta_\ell)
&=
1-\sum_{i=1}^{m^{(j)}}\theta_{\ell x^{(j)}_{i\ell}}
+\sum_{i=1}^{m^{(j)}}\theta_{\ell x^{(j)}_{i\ell}}
\left(
\frac{\beta_\ell\theta_{\ell x^{(j)}_{i\ell}}+(1-\beta_\ell)}{\beta_\ell\theta_{\ell x^{(j)}_{i\ell}}}
\right)^{q^{(j)}_{i\ell}}
\,.
\end{align}
Combining Equations \eqref{eq:integrating_y_second} and \eqref{eq:independent_fields_assumption}, we obtain the desired likelihood function
\begin{equation}\label{eq:likelihood_function}
P(\x|\Pi_n,\bbeta,\btheta)
=
\left(
\prod_{\ell=1}^L
\prod_{i=1}^n
\beta_\ell
\theta_{\ell x_{i\ell}}
\right)
\prod_{j=1}^K
\prod_{\ell=1}^L
f(\x_{j\ell},\beta_\ell,\btheta_\ell)
\,.
\end{equation}

\clearpage

\section{Implementation details of MCMC algorithms}\label{supp:convergence}

In this section, we provide more details on the MCMC algorithms used to approximate posterior quantities of interest in Section \ref{sec:sim_both} of the paper. 
Posterior computation is performed using the samplers described in Sections \ref{sec:ESCNB}-\ref{sec:ESCD} of the paper.
%The results are based on MCMC runs of $2\times 10^7$ iterations, discarding the first quarter of iterations as a burn-in and thinning every $1,000$ iterations to reduce memory requirements. 
The results are based on MCMC runs of $2\times 10^7$ iterations, thinning every $1,000$ iterations\footnote{More precisely, we perform $2\times 10^4$ MCMC iterations, and within each iteration perform one update of the global parameters and 1000 updates of the partition given the global parameters using the chaperones algorithm.} and then discarding the first 5000 out of 20000 resulting samples as burn-in.
In all cases standard convergence diagnostics and plotting of traceplots did not highlight significant mixing issues.
In the real data experiments of Section \ref{sec:experiments}, four MCMC runs for each dataset were performed to reduce Monte Carlo error, see more details below.
%Traceplots and estimates of Monte Carlo errors are reported in Section \ref{supp:convergence} of the supplement. 
%We did MCMC one run for each dataset in Section \ref{sec:sim_study} and four runs for each dataset for .
%For each run, we 
%The results in Table \ref{table:rates_simulations} and Figure \ref{fig:boxplots_simulations} are based on 20,000 posterior samples, obtained by thinning an MCMC run of $2\times 10^7$ iterations every $1,000$ iterations and discarding the first tenth as a burn-in. 
%The results in Sections \ref{sec:experiments} are obtained with four MCMC runs of the same length for each dataset.
MCMC runtimes were roughly 1 hour per run for Section \ref{sec:sim_study}, 20 hours per run for Section \ref{sec:SDS} and 50 hours per run for Section \ref{sec:SDS}.
The algorithms were implemented in \proglang{R} and a desktop computer with 32GB of RAM and an i9 Intel processor was used to perform the simulations.

When implementing the \emph{chaperones algorithm} of \citet[Appendix B]{miller15microclustering}, we used a non-uniform probability of selecting chaperones $i,j\in\{1,\dots,n\}$, assigning higher probability to pairs of records whose values agree on a large number of randomly selected fields
\footnote{The latter is done by first sampling a random number $N_f$ of fields between $0$ and $L$, then picking $N_f$ fields uniformly at random and then pick the chaperones $i,j\in\{1,\dots,n\}$ uniformly at random among those that agree on those $N_f$ fields. Other strategies could be used to favor pairs of chaperones that agree on various fields and we claim no optimality of this specific implementation.}.
%For the probability of selecting two \emph{chaperones} $i,j\in\{1,\dots,n\}$ in the algorithm of \citet[Appendix B]{miller15microclustering}, we employ a non-uniform proposal that assigns higher probability to pairs of records whose values agree on a large number of randomly selected fields.
This approach greatly improves convergence of the algorithm and respects the assumptions that the probability of selecting any pair of records is strictly greater than zero and is independent of the current partition, which are necessary to ensure the validity of the chaperones algorithm (see \citealp[Appendix B]{miller15microclustering}).
We expect the use of the chaperones algorithm with non-uniform proposals to be particularly beneficial in contexts with very small clusters, while for cases of larger clusters we expect the latter algorithm to behave similarly to standard split and merge schemes \citep{jain2004split}.

\begin{table}[h!]
\setlength{\tabcolsep}{0.5em}
{\renewcommand{\arraystretch}{1}
\begin{footnotesize}
\centering
\begin{tabular}{c||cc||cc}
 & \multicolumn{2}{c||}{SDS} & \multicolumn{2}{c}{SIPP}  \\ 
  \hline
   \hline
Model & FNR SE & FDR SE & FNR SE & FDR SE\\
\hline
\hline
DP & 0.03  &  0.04 & 0.02  & 0.01 \\
PY &  0.02  &  0.04  &  0.02  & 0.01\\
ESCNB & 0.02  &  0.04  &  0.01  & 0.02\\
ESCD & 0.02 & 0.02  &  0.01 & 0.01\\
\hline
\hline
\end{tabular}
\caption{Time-series MCMC error (in percentages) for the posterior expected values of FNR and FDR  for SDS and SIPP data sets.}\label{tab:mcmc_errors}
\end{footnotesize}
}
\end{table}

%Next, we present trace plots for K, FNR and FDR for four chains of each model based on 20,000 posterior samples obtained by thinning MCMC runs of $2\times 10^7$ iterations every $1,000$ iterations. 
Figures \ref{fig:SDSconverge} and \ref{fig:SIPPconverge} show the traceplots for K, FNR and FDR for the four chains used for the SDS and SIPP data sets, respectively. No issues of convergence are observed in either case. However, the mixing of the chains for the SDS is slower compared to the SIPP data. 
Table \ref{tab:mcmc_errors} displays the estimated MCMC standard errors for the estimation of the average posterior FNR and FDR using the four chains and discarding the first 5,000 iterations of each run as a burn-in. The MCMC standard errors were computed using the function \emph{summary.mcmc} from the \proglang{R} package \pkg{CODA} \citep{CODA}. 
The estimated standard errors are all between $0.01\%$ and $0.04\%$, indicating that the FNR and FDR estimates presented in Section \ref{sec:experiments} of the main document are reliable up to one decimal place (in percentage), which is the level of precision reported in Tables 2 and 3 of the main document.

\begin{figure}[h] 
\includegraphics[width=0.32\textwidth]{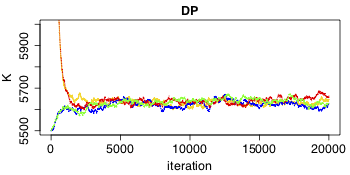}
\includegraphics[width=0.32\textwidth]{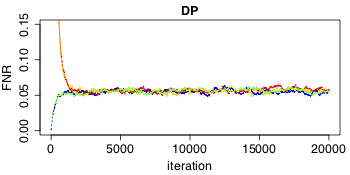}
\includegraphics[width=0.32\textwidth]{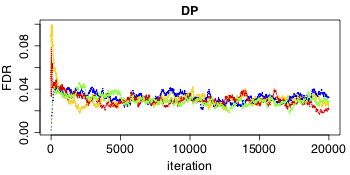}
\includegraphics[width=0.32\textwidth]{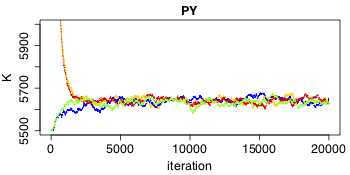}
\includegraphics[width=0.32\textwidth]{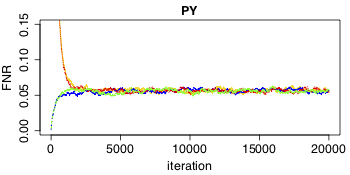}
\includegraphics[width=0.32\textwidth]{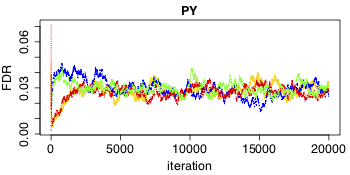}
\includegraphics[width=0.32\textwidth]{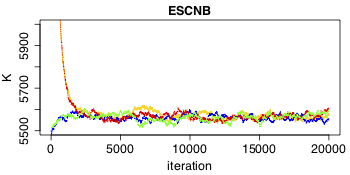}
\includegraphics[width=0.32\textwidth]{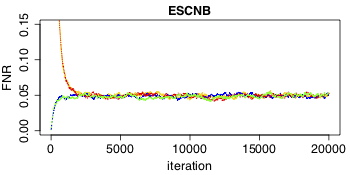}
\includegraphics[width=0.32\textwidth]{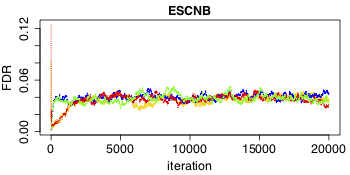}
\includegraphics[width=0.32\textwidth]{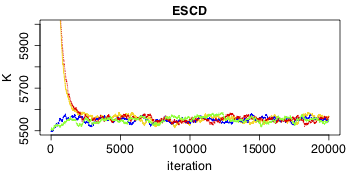}
\hspace{1mm}
\includegraphics[width=0.32\textwidth]{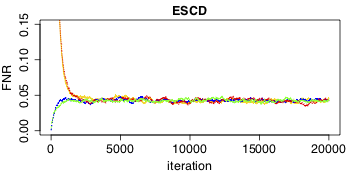}
\hspace{1mm}
\includegraphics[width=0.32\textwidth]{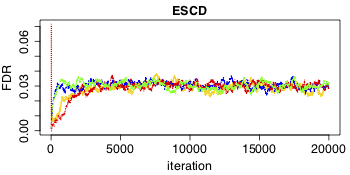}
\caption{\textbf{SDS dataset.}Trace plots of number of clusters (K), false negative rate (FNR) and false discovery rate (FDR) for four chains of 20,000 iterations of DP, PY, ESC-NB and ESC-D models for SDS data set of $K=5,500$.} \label{fig:SDSconverge}
\end{figure}

\begin{figure}[h] 
\includegraphics[width=0.32\textwidth]{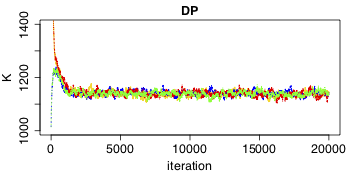}
\includegraphics[width=0.32\textwidth]{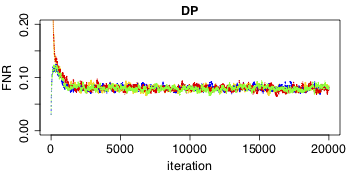}
\includegraphics[width=0.32\textwidth]{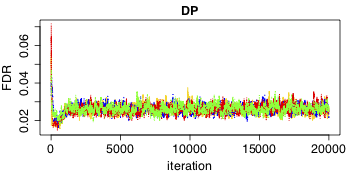}
\includegraphics[width=0.32\textwidth]{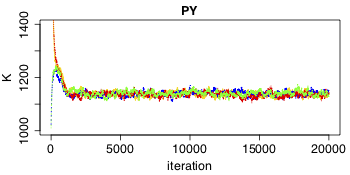}
\includegraphics[width=0.32\textwidth]{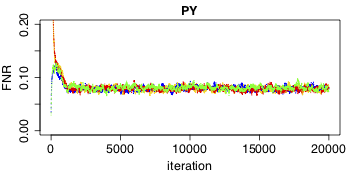}
\includegraphics[width=0.32\textwidth]{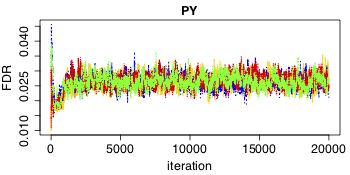}
\includegraphics[width=0.32\textwidth]{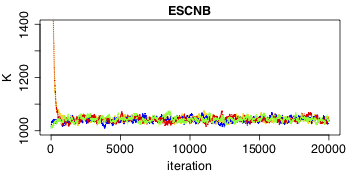}
\includegraphics[width=0.32\textwidth]{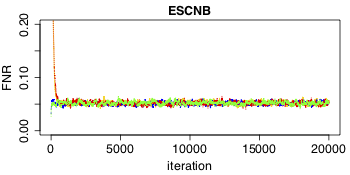}
\includegraphics[width=0.32\textwidth]{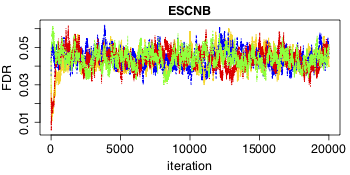}
\includegraphics[width=0.32\textwidth]{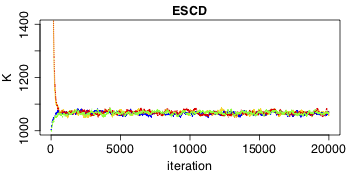}
\hspace{1mm}
\includegraphics[width=0.32\textwidth]{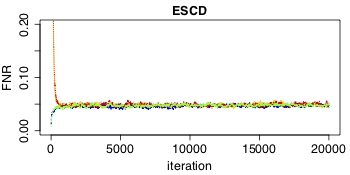}
\hspace{1mm}
\includegraphics[width=0.32\textwidth]{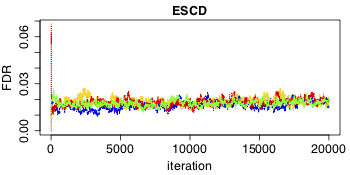}
\caption{\textbf{SIPP dataset.}Trace plots of number of clusters (K), false negative rate (FNR) and false discovery rate (FDR) for four chains of 20,000 iterations of DP, PY, ESC-NB and ESC-D models for SIPP data set of $K=1,000$.} \label{fig:SIPPconverge}
\end{figure}

\clearpage

\section{Additional results for the simulation study}\label{supp:sim_results}
\begin{figure}[h!]
\includegraphics[width=0.98\textwidth]{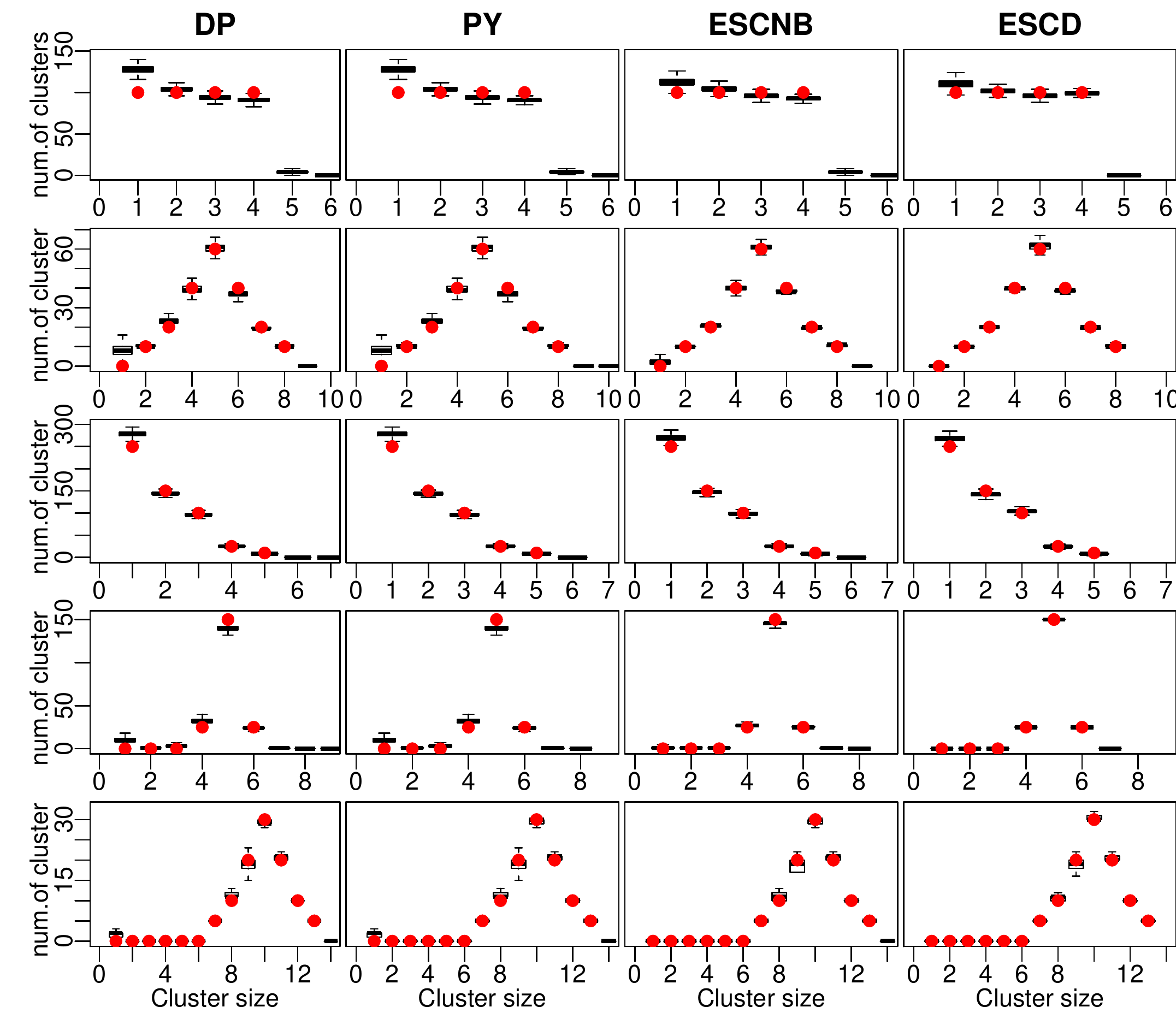}
\caption{Posterior distribution of the number of clusters of each size (black boxplots based on 20k MCMC samples from the posterior after thinning) versus number of clusters of each size in the true data-generating partition (red dots) for $\beta=0.01$. Each column corresponds to a different prior for the partition, and each row to a different data generating partition.} \label{fig:boxplots_simulations}
\end{figure}
\begin{figure}[h!]
\includegraphics[width=0.98\textwidth]{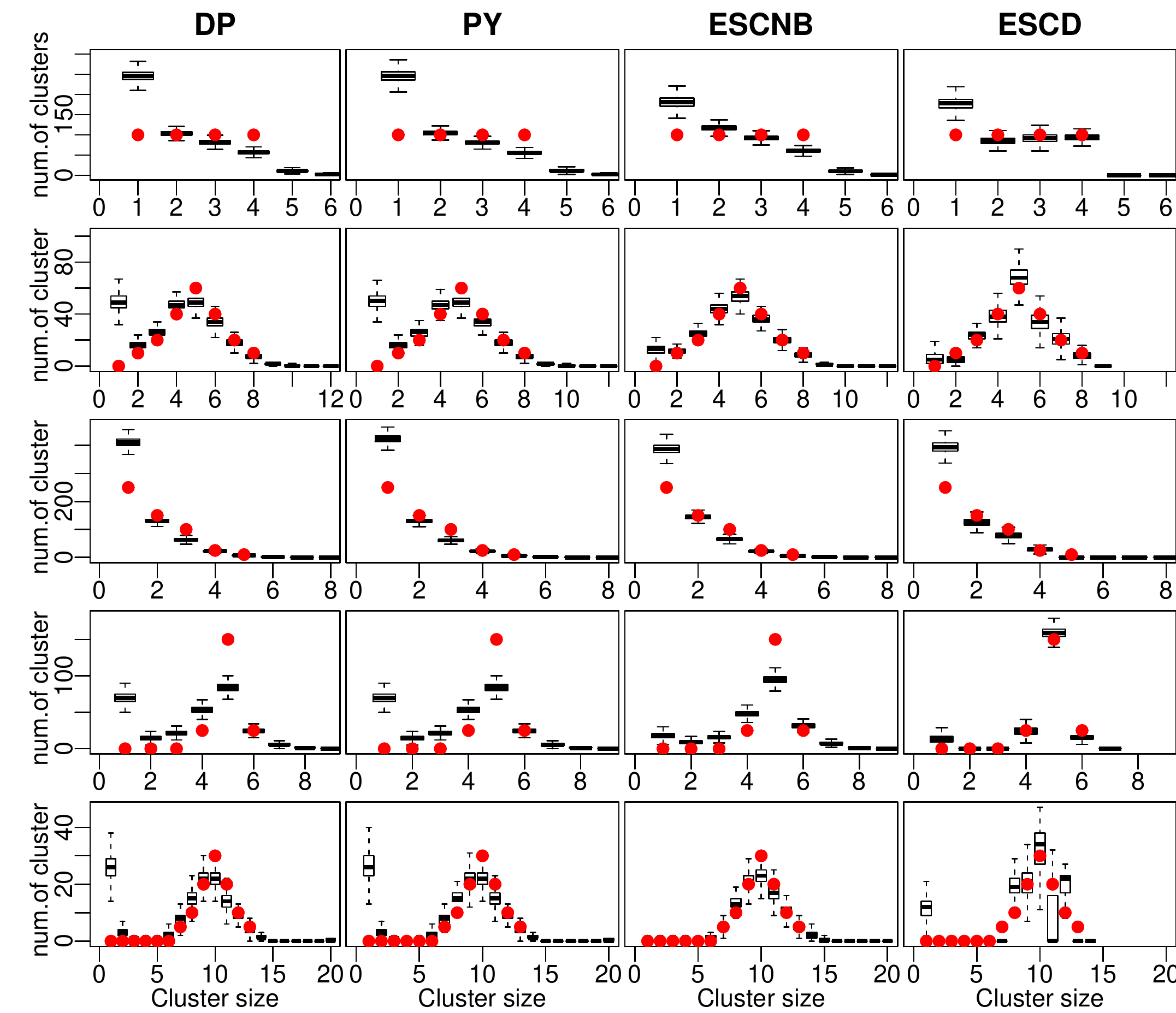}
\caption{Posterior distribution of the number of clusters of each size (black boxplots based on 20k MCMC samples from the posterior after thinning) versus number of clusters of each size in the true data-generating partition (red dots) for $\beta=0.10$. Each column corresponds to a different prior for the partition, and each row to a different data generating partition.} \label{fig:boxplots_simulations}
\end{figure}
\clearpage

%\bibliographystyle{agsm}
%\bibliography{references}%,chomp,biblio

\end{document}